\documentclass[11pt,a4paper]{article}
\usepackage[T1]{fontenc}
\usepackage[utf8]{inputenc}
\usepackage{lmodern}
\usepackage[margin=28mm]{geometry}
\usepackage{microtype}
\usepackage{amsmath,amssymb,amsthm,mathtools}
\usepackage[authoryear,round]{natbib}
\usepackage{xurl}
\usepackage[unicode,bookmarksnumbered,hidelinks]{hyperref}
\hypersetup{pdftitle={Strategy-proof choice on strictly convex frontiers},
 pdfauthor={Siwei Chen and Pengbo Wang},
 pdfsubject={Individual incentives and collective choice on a constrained boundary},
 pdfkeywords={strategy-proofness, unanimity, public-good location, strictly convex frontiers, attainable menus}}
\newcommand{\R}{\mathbb R}

\newcommand{\Circ}{\mathbb S^1}
\newtheorem{theorem}{Theorem}
\newtheorem{proposition}{Proposition}
\newtheorem{corollary}{Corollary}
\newtheorem{lemma}{Lemma}
\theoremstyle{definition}
\newtheorem{definition}{Definition}
\theoremstyle{remark}

\numberwithin{equation}{section}
\title{Strategy-proof choice on strictly convex frontiers}
\author{Siwei Chen\thanks{Corresponding author. Lingnan College, Sun Yat-sen University, Guangzhou, China. Email: chensw6@mail.sysu.edu.cn. ORCID: 0000-0002-8486-2251.}
\and Pengbo Wang\thanks{Lingnan College, Sun Yat-sen University, Guangzhou, China. Email: illusory0101@gmail.com.}}
\date{September 22, 2026}

\begin{document}
\maketitle
\begin{abstract}
A collective rule must often choose a public outcome on a constrained
boundary. We characterize truthful choice with diagonal unanimity on the entire
boundary of any compact full-dimensional strictly convex body in
finite dimension at least two. On the full domain of linear support
preferences, ordinary individual strategy-proofness and diagonal
unanimity force the rule to select one fixed participant's preferred
point at every profile. The population is any fixed finite nonempty
set. No Pareto condition, boundary smoothness, or continuity or
measurability of the rule is assumed. If interior outcomes are
allowed instead, two or more participants can use positive fixed
weighted averages of preferred points for truthful non-dictatorial
compromise on the same type domain.
The proof derives regularity from the two agents' incentive
inequalities, transports a common support-measure bound across
reports, and makes every convexified attainable menu a Minkowski
summand of the feasible body. Boundary generation then forces
fixed control. The result extends the spherical public-good
location question and identifies the role of boundary-valued
choice in eliminating truthful compromise.
\end{abstract}
\noindent\textbf{Keywords:} strategy-proofness; unanimity;
public-good location; strictly convex frontiers; attainable menus.

\smallskip
\noindent\textbf{JEL classification:} D71; D82; H41.

\section{Introduction}\label{sec:intro}

A rule that chooses a shared facility or public action should give each
participant no reason to misreport its preferred outcome. It should also
respect a location on which everyone agrees. On a line, median rules meet both
requirements while allowing several reports to affect the choice. On a closed
curved surface there is no global order that can play the same role, and an
average of preferred locations generally leaves the feasible surface. The
combination of truthfulness and a boundary constraint therefore poses a
distinct collective-choice problem.

The spherical public-good location model of \citet{cps2015,cps2016} makes this
problem especially clear. A finite society chooses one point on a sphere, and
each participant ranks outcomes by distance from an ideal point. The research
memorandum \citet{cps2015} asks whether ordinary individual
strategy-proofness together with Pareto optimality forces dictatorship. We
give an affirmative answer under the weaker requirement that the rule merely
respect agreement at unanimous profiles. The result applies to the sphere and,
more generally, to the entire boundary of every compact full-dimensional
strictly convex body.

The general formulation separates the geometry that drives the conclusion
from the special metric of a sphere. Each participant reports a unit direction
and values a public outcome by its projection in that direction. Strict
convexity gives every report a unique preferred support point on the boundary.
On a Euclidean sphere these rankings coincide with rankings by spherical or
chordal distance from the reported ideal point. On a general body they are
linear valuations of a public action under a boundary constraint; no metric
interpretation is imposed.

The boundary requirement has direct economic content. If the entire solid
body were feasible, a rule could choose a fixed weighted average of the
participants' preferred support points. That rule is truthful and unanimous,
and every participant with positive weight influences the outcome. Strict
convexity places the average in the interior whenever the positively weighted
peaks differ. The boundary removes precisely this form of truthful compromise
while leaving the reports and incentive inequalities unchanged.

Our characterization shows that the removal is complete. Every deterministic
boundary-valued rule satisfying ordinary individual strategy-proofness and
diagonal unanimity selects the preferred support point of one fixed
participant at every profile. The conclusion holds in every finite dimension
of at least two and for every fixed finite nonempty population. The boundary
may be nonsmooth, and the rule need not be continuous, measurable, anonymous,
or neutral. We assume neither Pareto optimality nor any coalitional incentive
condition.

The theorem also identifies how little efficiency is needed. Within the class
of individually strategy-proof boundary rules, agreement at unanimous
profiles, onto range, Pareto efficiency, and weak Pareto efficiency are all
equivalent to fixed dictatorship. The main result uses only the agreement
content of these conditions. It therefore isolates the force of the feasible
boundary rather than building dictatorship into a strong efficiency or
incentive requirement.

Two features make the proof nonstandard. First, incentive compatibility is
assumed for an arbitrary total rule, so continuity and measurability must be
derived rather than invoked. Second, a nonsmooth strictly convex body can have
several supporting directions at one boundary point. The support map is onto
but need not be one-to-one, and equality of preferred points cannot be used to
identify reports. The proof keeps this distinction throughout and uses normal
cones at the final geometric step.

The argument starts from the attainable menu of one participant while the
other reports are fixed. Truthfulness makes the participant's report select a
support maximizer of the convexified menu. Planar projections yield regularity
and a curvature-measure bound without assuming smoothness. A joint
support-envelope argument transports the bound across the other participant's
reports and shows that every convexified menu is a Minkowski summand of the
feasible body. The detailed measure argument also recovers the original rule
at exceptional profiles, so the conclusion is pointwise rather than merely
almost everywhere.

Strict convexity and boundary-valuedness then reduce every column menu to two
possibilities. It is either the singleton preferred by the fixed opponent or
the entire body after convexification, in which case the reporting participant
obtains its own unique support point. The second participant's incentive
constraints rule out any switch between these two forms. This fixes the
controller in the binary problem. A merging argument and successive
unilateral deviations extend the conclusion to an arbitrary finite
population without strengthening ordinary strategy-proofness.

The result provides a complete answer on this support-linear domain and a
geometric explanation for it. The characterization is pointwise, allows
singular curvature and nonsmooth support points, and makes the boundary
restriction visible in the proof itself.

\subsection{Related literature}\label{sec:literature}

The closest work is the spherical public-good model of
\citet{cps2015,cps2016}. Their established dictatorship results use strict or
coalitional strategy-proofness. The research-memorandum formulation, also
discussed by \citet{chatterjee2017}, separates those requirements from ordinary
individual strategy-proofness and poses the latter together with Pareto
optimality as the remaining question. Our theorem uses ordinary individual
strategy-proofness and replaces Pareto optimality by diagonal unanimity. It
also identifies the geometric scope of the argument by passing from a sphere
to any strictly convex frontier.

Location rules on circles and networks form a natural neighboring literature.
\citet{schummervohra2002} characterize strategy-proof location on networks,
including the restrictions created by cycles. Their setting does not by itself
reduce the present higher-dimensional problem to a circle: when reports lie in
a two-plane, the chosen outcome need not lie on the corresponding planar
section. Our proof therefore projects outcomes into a convex planar body and
analyzes the resulting body-valued rule. For rules whose continuity is assumed,
the topological argument of \citet[Theorem~13, pp.~97--98]{chichilnisky1983}
provides a related benchmark. Appendix~\ref{app:continuous} gives the
finite-dimensional version used for comparison; the main theorem instead
derives the regularity it needs from incentives.

Generalized median and issue-voting results explain how non-dictatorial
strategy-proof choice can survive on ordered or median structures.
\citet{BarberaMassoNeme1997,BarberaMassoNeme1999} study constrained finite
alternative sets and domains that preserve generalized median voter schemes.
\citet{NehringPuppe2007} characterize strategy-proof social choice on finite
median spaces through consistent voting by issues under a rich single-peaked
domain. Those constructions use discrete betweenness or issue structure. A
strictly convex frontier is a continuous outcome space without a global median
operation, and the support-linear domain permits weak orders when distinct
directions expose the same point.

The compact-range analysis of \citet{BarberaMassoSerizawa1998} is closer in
allowing multidimensional outcomes, but it works with a rich domain of
multidimensional single-peaked preferences on compact ranges. Our domain is
considerably more specific: reports are support-linear valuations, while the
range is the entire boundary and hence has empty ambient interior. This
combination makes the menu geometry decisive. If the range is enlarged to the
solid body, the weighted-average rule in Proposition~\ref{prop:body-average}
immediately restores truthful non-dictatorial compromise.

Other dictatorship results emphasize richness of preferences rather than the
geometry of a thin range. \citet{Zhou1991} studies public-good environments with
continuous quasiconcave preferences and, in one formulation, an abundance
condition that includes a broad class of quadratic preferences.
\citet{AchuthankuttyRoy2018} study finite domains of strict orders: their
top-circular structure yields dictatorship only with additional conflict
conditions, and top-circularity alone permits a non-dictatorial example. The
support-linear weak-order domain here satisfies neither type of richness
assumption. Our characterization instead follows from the joint restrictions
that ordinary incentive compatibility places on all attainable menus along a
strictly convex boundary.

\section{Model and result}\label{sec:model}

Let \(N=\{1,\ldots,n\}\), with \(1\leq n<\infty\).
The group selects a common public outcome from \(\partial K\),
where \(K\subset\mathbb R^d\) is compact, full-dimensional and
strictly convex, and \(2\leq d<\infty\). Here \(\partial K\) and
\(\operatorname{int}K\) denote the boundary and interior of \(K\).
Full-dimensionality means \(\operatorname{int}K\ne\varnothing\),
and strict convexity means that
\((1-t)y+tz\in\operatorname{int}K\) for all distinct \(y,z\in K\)
and every \(t\in(0,1)\). Equivalently, \(\partial K\) contains no
nondegenerate line segment. There are no transfers.

The type space is the Euclidean unit sphere
\[
 S^{d-1}=\{p\in\mathbb R^d:\|p\|=1\}.
\]
A type \(p\) induces the preference relation
\[
 y\succeq_p z \quad\Longleftrightarrow\quad p\cdot y\geq p\cdot z.
\]
Scaling a nonzero linear valuation by a positive constant
does not change this ranking, so unit normalization imposes
no cardinal comparison between participants.

For a nonzero vector \(P\), let
\[
 h_K(P)=\max_{y\in K}P\cdot y,\qquad
 \{X(P)\}=\operatorname*{argmax}_{y\in K}P\cdot y,\qquad
 x_K(p)=X(p)\quad(p\in S^{d-1}).
\]
Strict convexity makes the support point unique, and it lies
on \(\partial K\). The support map is onto the boundary
but need not be one-to-one. Distinct reported directions
can have the same peak while ranking other outcomes
differently. Reports therefore specify valuations, not
merely a top-point selector. The origin need not lie in \(K\).

When \(K\) is a Euclidean unit ball, \(x_K(p)=p\).
Great-circle distance \(\arccos(p\cdot y)\) and chordal
distance induce the same preferences as the inner product.
For a general body we maintain the linear valuations;
we do not identify them with distance preferences from
\(x_K(p)\).

For a profile \(p=(p_1,\ldots,p_n)\), write \(p_{-i}\) for the
reports of all participants other than \(i\).

\begin{definition}\label{def:sp}
A deterministic total rule is a function
\(F:(S^{d-1})^n\to\partial K\) defined at every profile.
It is \emph{ordinarily individually strategy-proof}, or IC, if
\begin{equation}\label{eq:ic}
 p_i\cdot F(p_i,p_{-i})\geq p_i\cdot F(r_i,p_{-i})
\end{equation}
for every \(i\), true type \(p_i\), alternative report \(r_i\)
and opponents' profile \(p_{-i}\).
It is \emph{diagonally unanimous} if
\begin{equation}\label{eq:unanimity}
                      F(p,\ldots,p)=x_K(p)
                      \qquad(p\in S^{d-1}).
\end{equation}
A \emph{fixed dictatorship} selects \(x_K(p_j)\) at every
profile for a single fixed label \(j\).
\end{definition}

The incentive inequalities are weak and concern arbitrary
unilateral deviations, evaluated at the same true type
on both sides. Diagonal unanimity constrains only profiles
with identical reported directions. We assume no
continuity, measurability, symmetry or coalitional
incentive property.

We can now state the characterization.

\begin{theorem}[Strategy-proof choice on strictly convex frontiers]\label{thm:frontier}
Let \(2\leq d<\infty\), \(1\leq n<\infty\), and let
\(K\subset\mathbb R^d\) be compact, full-dimensional
and strictly convex. A deterministic total rule
\(F:(S^{d-1})^n\to\partial K\) is ordinarily individually
strategy-proof and diagonally unanimous if and only if
there is a fixed \(j\in\{1,\ldots,n\}\) such that
\[
                      F(p_1,\ldots,p_n)=x_K(p_j)
                      \qquad\text{for every profile}.
\]
No boundary smoothness or rule continuity or
measurability is required.
\end{theorem}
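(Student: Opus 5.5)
The "if" direction is immediate. A fixed dictatorship \(F(p)=x_K(p_j)\) is diagonally unanimous by construction. It is IC because only participant \(j\) influences the outcome, and \(x_K(p_j)\) maximizes \(p_j\cdot y\) over all of \(K\), hence over any outcome \(j\) could induce by misreporting.

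For the "only if" direction I follow the route outlined in the introduction. I first treat \(n=2\) and then extend by merging.

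\textbf{Attainable menus.} Fix participant 1 and an opponent report \(q\). Let \(M(q)=\{F(r,q):r\in S^{d-1}\}\subset\partial K\) be the attainable menu and \(C(q)\) the closure of its convex hull. IC says \(F(p,q)\in\operatorname{argmax}_{y\in M(q)}p\cdot y\). Hence \(h_{C(q)}(p)=p\cdot F(p,q)\), and \(p\mapsto F(p,q)\) is a support selection of \(C(q)\). The same holds for participant 2 with menus \(D(p)\).

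The key intermediate claim is that every \(C(q)\) is a Minkowski summand of \(K\), that is, \(h_K-h_{C(q)}\) is itself a support function. To get this I would:
\begin{enumerate}
\item project onto two-planes and use the planar fact that a support selection of a convex body is monotone along the circle. This yields regularity, namely that \(F(\cdot,q)\) is continuous off a countable set in each planar section, without assuming any continuity of \(F\).
\item combine participant 1's inequalities with participant 2's inequalities \(q\cdot F(p,q)\ge q\cdot F(p,q')\) to compare the surface-area (curvature) measures of \(C(q)\) and \(C(q')\) along joint support envelopes.
\item use diagonal unanimity, \(F(q,q)=x_K(q)\), to anchor the comparison at \(K\). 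Transporting the bound gives \(S_{C(q)}\le S_K\) as measures, in the planar sense, and hence the summand property.
\end{enumerate}
I expect this curvature-measure transport to be the main obstacle. It must also handle exceptional profiles, where \(F\) need not equal the support point of \(C(q)\); there I would recover \(F\) pointwise through strictness of the IC maximizers on \(\partial K\).

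\textbf{Dichotomy.} Next I use boundary-valuedness. Since \(M(q)\subset\partial K\), every exposed point of \(C(q)\) lies on \(\partial K\). A summand \(C\) of \(K\) with \(K=C+L\) and \(\operatorname{ext}C\subset\partial K\) should force either \(L\) to be a point, so \(C=K\) up to translation, and then \(C=K\) exactly, or \(C\) to be a single point. Strict convexity rules out anything intermediate: a nontrivial \(L\) shifts the boundary off itself. Normal cones handle nonsmooth points.

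So each column \(q\) is one of two kinds. Either \(F(\cdot,q)\) is constant, and unanimity at \(p=q\) makes the constant \(x_K(q)\). Or \(F(p,q)=x_K(p)\) for all \(p\), where nonsmooth points are resolved by the maximizer condition over the whole of \(\partial K\). Let \(A\) be the set of \(q\) of the first kind and \(B\) the set of the second kind.

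Suppose both \(A\) and \(B\) are nonempty. Take \(q\in A\), \(q'\in B\), and \(p\) with \(x_K(p)\ne x_K(q)\). Participant 2 with true type \(q\) facing \(p\) must weakly prefer \(x_K(q)\) to \(x_K(p)\), which holds. With true type \(q'\), participant 2 gets \(x_K(p)\) and could deviate to \(q\) to obtain \(x_K(q)\). IC then requires \(q'\cdot x_K(p)\ge q'\cdot x_K(q)\) for all such \(p\) and \(q\). Taking \(p=q'\), and \(q\in A\) with a different peak, gives a strict violation. The remaining case, where \(A\) is a single peak class, can be handled by an analogous choice of \(p\). Hence \(A=S^{d-1}\) or \(B=S^{d-1}\), which is dictatorship by 2 or by 1.

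\textbf{General \(n\).} Induct on \(n\). The case \(n=1\) is unanimity. For \(n\ge3\), merge participants 1 and 2 into a single agent by defining \(G(p,p_3,\dots)=F(p,p,p_3,\dots)\). This \(G\) is IC, because a joint deviation of the merged agent decomposes into two unilateral deviations with the same true type, each weakly unprofitable. It is also unanimous. By induction \(G\) has a dictator. If the dictator is some \(j\ge3\), successive unilateral deviations show \(F\) ignores participants 1 and 2, and \(j\) dictates. If the dictator is the merged agent, fixing the others and applying the \(n=2\) result to participants 1 and 2 gives a dictator, whose label must be constant across the other reports by the same switching argument as above. Finally, the others are irrelevant by unilateral deviation.
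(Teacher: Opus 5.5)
\textbf{The last step of your binary case fails as written.} For participant~2 of true type $q'\in B$ deviating to $q\in A$, IC gives $q'\cdot x_K(p)\ge q'\cdot x_K(q)$ for every $p$. Your choice $p=q'$ makes this a tautology, because $x_K(q')$ maximizes $q'\cdot y$ on $K$. The informative choice is $p=-q'$. It only yields $x_K(q)=x_K(-q')$: all constant columns share one peak $c$, and $A\subset N_K(c)$, $B\subset -N_K(c)$.

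No further choice of $p$ removes this residual case. Once these inclusions hold, every IC inequality within and between columns is satisfied. The configuration is exactly the paper's dimension-one example ($K=[-1,1]$, $F(p,1)=1$, $F(p,-1)=p$), which also satisfies your column dichotomy. The contradiction must come from geometry that uses $d\ge2$, and your binary argument never invokes it. Take $\overline B(y_0,\rho)\subset K$ and $a=c-y_0$. Every nonzero $u\in N_K(c)$ satisfies $a\cdot u\ge\rho\|u\|>0$. Hence a unit vector orthogonal to $a$ lies in neither $N_K(c)$ nor $-N_K(c)$, contradicting $A\cup B=S^{d-1}$.

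\textbf{Your Minkowski-summand step is a plan without its key mechanism.} Projecting to a two-plane $L$ gives a planar binary rule only when $q\in L$. So even a complete planar argument shows that $\Pi_LC(q)$ is a summand of $\Pi_LK$ only for planes through $q$, that is, convexity of $h_K-h_{C(q)}$ along lines in those planes. For $d=2$ that is everything; for $d\ge3$ it is only the starting point of the paper's argument.

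The paper then transports the bound across opponent reports (Lemma~\ref{lem:transport}):
\begin{enumerate}
\item The planar summands make the cross inner product constant on exposed menu faces. This gives joint continuity of $U=P\cdot F$ and $V=Q\cdot F$, and measurability of $F$.
\item $D^2_{PP}U$ and $D^2_{QQ}V$ are positive semidefinite matrix measures. Their symmetry makes $\partial_Q\partial_PF$ fully symmetric, so zero-homogeneity gives invariance of $D^2_{PP}U$ under $\Phi_\varepsilon(P,Q)=(P,P+\varepsilon Q)$.
\item The identity $D_P(F\circ\Phi_\varepsilon)=D^2_{PP}U+\Phi_\varepsilon^*D^2_{QQ}V$, together with $F\circ\Phi_\varepsilon\to X$, yields $0\preceq D^2_{PP}U\preceq(D^2h_K)\otimes\mathcal L^d_Q$. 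This holds after removing possible mass on parallel pairs.
\item Continuity of $U$ in $Q$ then lifts the bound from almost every column to every actual column.
\end{enumerate}
Your phrase ``compare curvature measures along joint support envelopes'' supplies none of this. Even the planar budget needs the additive splitting of the planar rule into one-variable components on the two nonparallel strips, and the diagonal identity that their sum is $z$.

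Finally, if $S_{C(q)}\le S_K$ means surface-area measures in $\R^d$, it is the wrong order for $d\ge3$. The spheroid with semi-axes $(s,1,1)$, $s>1$, has surface-area density $s^2/h^4\le s^2$, so it is dominated by the ball of radius $s$. Yet it has a principal radius $s^2>s$, so it is not a summand of that ball. What is needed is the Loewner order of Hessian measures, or planar curvature measures on every plane, which is exactly what the transport supplies.
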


The conclusion concerns the original rule at every
profile, not merely a regular representative almost
everywhere. It includes coincident and opposite
reports and nonsmooth support points with multiple
normals. The label may depend on the rule and on
\(K\), but never on the reports. For a spherical
location problem, the theorem says that the same
person's ideal point must always be chosen.

The proof below first establishes the two-agent
classification and then lifts it to every fixed finite
population. The population argument uses successive
individual incentive comparisons with a common evaluator;
ordinary IC is not strengthened to group-strategy-proofness.
Appendix~\ref{app:population} proves the efficiency
equivalences and records the one-dimensional and continuous
benchmarks.

For clarity, call an outcome \emph{Pareto efficient}
if no other boundary point weakly raises all
reported utilities and strictly raises one.
It is \emph{weakly Pareto efficient} if no boundary
point strictly raises every reported utility. The rule is
\emph{onto} if its range is all of \(\partial K\).

\begin{corollary}\label{cor:efficiency}
Within the class of individually IC total
boundary-valued rules, fixed dictatorship,
diagonal unanimity, onto range, Pareto efficiency,
and weak Pareto efficiency are equivalent.
Weak efficiency required only at diagonal
profiles already implies these properties.
\end{corollary}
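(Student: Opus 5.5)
The plan is to route every listed property through Theorem~\ref{thm:frontier}. Fix an IC total rule \(F:(S^{d-1})^n\to\partial K\). The implications out of fixed dictatorship are direct. If \(F(p)=x_K(p_j)\) for a fixed \(j\), then \(F(p,\ldots,p)=x_K(p)\), so \(F\) is diagonally unanimous. Every boundary point has the form \(x_K(q)\) for some \(q\), because the support map is onto \(\partial K\); taking the diagonal profile \((q,\ldots,q)\) shows that \(F\) is onto.

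For Pareto efficiency, suppose some \(z\in\partial K\) weakly raises every reported utility and strictly raises one. Then in particular \(p_j\cdot z\ge p_j\cdot x_K(p_j)=h_K(p_j)\). Since \(x_K(p_j)\) is the unique maximizer, this forces \(z=x_K(p_j)\), and then no utility can rise strictly, a contradiction. So fixed dictatorship is Pareto efficient, and Pareto efficiency trivially implies weak Pareto efficiency.

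The remaining task is to reach diagonal unanimity from each of the weaker properties; Theorem~\ref{thm:frontier} then gives fixed dictatorship. I will take three steps.

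\begin{itemize}
\item \textbf{Weak efficiency at diagonal profiles.} Weak Pareto efficiency, even when imposed only at diagonal profiles, already yields diagonal unanimity. At the profile \((p,\ldots,p)\), if \(F(p,\ldots,p)=y\ne x_K(p)\), then \(p\cdot x_K(p)>p\cdot y\) by uniqueness of the support point. The point \(x_K(p)\in\partial K\) would then strictly raise every reported utility, contradicting weak efficiency. This single step covers Pareto efficiency, weak Pareto efficiency, and the final sentence of the corollary.
\item \textbf{Onto range.} The remaining implication is onto range \(\Rightarrow\) diagonal unanimity, which uses IC. I would first show that an onto IC rule satisfies unanimity, \(F(p,\ldots,p)=x_K(p)\), by the standard successive-deviation argument. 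Since the rule is onto, pick a profile \(r\) with \(F(r)=x_K(p)\). Replace the reports of \(r\) one at a time by \(p\), obtaining profiles \(r^0=r,r^1,\ldots,r^n=(p,\ldots,p)\).
\item \textbf{Inductive step.} Suppose \(F(r^{k-1})=x_K(p)\), and let agent \(k\), of true type \(p\), switch its report from \(r_k\) to \(p\). IC at the profile \(r^k\) gives \(p\cdot F(r^k)\ge p\cdot F(r^{k-1})=h_K(p)\). So \(F(r^k)\) attains the maximum of \(p\) over \(K\), and uniqueness forces \(F(r^k)=x_K(p)\). Here the deviator's true type is \(p\) and the comparison is a single unilateral deviation, so ordinary IC suffices. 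Induction gives \(F(p,\ldots,p)=x_K(p)\).
\end{itemize}

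The only point needing care is this last chain. The support map can fail to be injective, but the argument never identifies reports from peaks: it only uses that \(x_K(p)\) is the unique \(p\)-maximizer on \(K\), which strict convexity guarantees. With diagonal unanimity in hand in every case, Theorem~\ref{thm:frontier} closes the cycle of equivalences. The genuinely hard content is all inside Theorem~\ref{thm:frontier}; I expect no obstacle in the corollary beyond correctly ordering the deviation chain for onto range.
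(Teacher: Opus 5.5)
Your proposal is correct and is essentially the paper's argument: the same cycle fixed dictatorship $\Rightarrow$ Pareto $\Rightarrow$ weak Pareto $\Rightarrow$ diagonal weak Pareto $\Rightarrow$ unanimity $\Rightarrow$ fixed dictatorship (via Theorem~\ref{thm:frontier}), with onto range reduced to unanimity by successive unilateral deviations of truthful type-$p$ agents. The differences are cosmetic: you carry the equality $F(r^k)=x_K(p)$ forward by induction instead of telescoping inequalities from the diagonal, and you derive ontoness from dictatorship rather than from unanimity.
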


At a diagonal profile the common support point
strictly dominates every other feasible outcome.
The theorem thus uses only the agreement content
of efficiency. Ontoness yields the same agreement
under IC by a sequence of unilateral comparisons.
The complete implications are proved in
Appendix~\ref{app:population}.

The diagonal condition cannot simply be dropped:
a constant boundary rule is IC. The dimensional
restriction at one is also sharp. On the two
endpoints of an interval, the two-person rule
that chooses the upper endpoint whenever either
report prefers it is truthful and unanimous,
but neither fixed label controls both
disagreement profiles. Appendix~\ref{app:population}
verifies this comparison. Strict convexity,
in turn, supplies unique support points and
the menu dichotomy used in the proof.

The role of the boundary can be isolated by changing only the feasible outcome
space from \(\partial K\) to \(K\).

\begin{proposition}\label{prop:body-average}
Let \(\lambda_i\geq0\), \(\sum_i\lambda_i=1\).
The body-valued rule
\[
                      G(p)=\sum_{i=1}^n\lambda_i x_K(p_i)
\]
is continuous, individually IC and diagonally unanimous.
If at least two weights are positive, it is not a
fixed dictatorship. Whenever the positively weighted
peaks are not all identical, \(G(p)\) lies in
\(\operatorname{int}K\).
\end{proposition}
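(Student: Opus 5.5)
The plan is to verify the four claims in Proposition~\ref{prop:body-average} one at a time, taking the key facts from the model section: the support point \(x_K(p)\) is unique, and strict convexity makes the support map well behaved.

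\textbf{Continuity.} It suffices to show that \(x_K:S^{d-1}\to\partial K\) is continuous, since \(G\) is a fixed linear combination of these maps. I would use a compactness argument. Take \(p^k\to p\) and any limit point \(y\) of \(x_K(p^k)\). From \(p^k\cdot x_K(p^k)\geq p^k\cdot z\) for all \(z\in K\), passing to the limit gives \(p\cdot y\geq p\cdot z\) for all \(z\in K\). So \(y\) is a maximizer of \(p\cdot(\cdot)\) on \(K\), and uniqueness forces \(y=x_K(p)\). Every subsequence therefore has a further subsequence converging to \(x_K(p)\), which gives convergence.

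\textbf{Individual IC.} Fix \(i\), a true type \(p_i\), an alternative report \(r_i\), and the opponents' reports \(p_{-i}\). Replacing \(p_i\) by \(r_i\) changes \(G\) only through the term \(\lambda_i x_K(\cdot)\). Hence
\[
 p_i\cdot G(p_i,p_{-i})-p_i\cdot G(r_i,p_{-i})
 =\lambda_i\bigl(p_i\cdot x_K(p_i)-p_i\cdot x_K(r_i)\bigr).
\]
This is nonnegative, because \(\lambda_i\geq0\) and \(x_K(p_i)\) maximizes \(p_i\cdot y\) over \(K\), while \(x_K(r_i)\in K\).

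\textbf{Diagonal unanimity.} At a diagonal profile, \(G(p,\ldots,p)=\sum_i\lambda_i x_K(p)=x_K(p)\), since the weights sum to one.

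\textbf{Not a fixed dictatorship.} Suppose \(\lambda_a,\lambda_b>0\) with \(a\neq b\), and suppose \(G=x_K(p_j)\) for some fixed \(j\). I would reach a contradiction at a profile where the positively weighted peaks are not all identical. Choose a direction \(q\) with \(x_K(q)\neq x_K(p_j)\); such a \(q\) exists because \(K\) is full-dimensional, so \(\partial K\) has at least two points. Then:
\begin{itemize}
 \item If \(\lambda_j>0\), let \(j\) report \(p\) and everyone else report \(-p\). The points \(x_K(p)\) and \(x_K(-p)\) are distinct, because \(h_K(p)+h_K(-p)>0\) for a full-dimensional body. So some positively weighted peaks differ, and by the interior claim below \(G\) lies in \(\operatorname{int}K\). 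Hence \(G\neq x_K(p_j)\in\partial K\).
 \item If \(\lambda_j=0\), then \(G\) does not depend on \(p_j\), so it cannot equal \(x_K(p_j)\) for every \(p_j\).
\end{itemize}

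\textbf{Interior claim.} Let \(I=\{i:\lambda_i>0\}\). The point \(G(p)\) is a strict convex combination of the points \(x_K(p_i)\), \(i\in I\), and not all of these points coincide. I would proceed in three steps:
\begin{enumerate}
 \item Pick two distinct peaks \(y\) and \(z\) among them.
 \item Write \(G(p)=\mu w+(1-\mu)w'\), where \(\mu\in(0,1)\), \(w\) is the normalized average of the terms with the peak \(y\) split off, and \(w'\) is the normalized average of the remaining terms. Arrange the split so that \(w\neq w'\), with both in \(K\) by convexity.
 \item Apply strict convexity of \(K\) to conclude \(G(p)\in\operatorname{int}K\).
\end{enumerate}
A cleaner route may be to argue that an interior point of the convex hull of distinct points of \(K\) lies in \(\operatorname{int}K\). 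One can also argue by supporting hyperplanes: if \(G(p)\in\partial K\), take a supporting normal \(u\) at \(G(p)\). Because the weights are strictly positive, every peak \(x_K(p_i)\) with \(i\in I\) must then lie on the face \(\{y\in K:u\cdot y=h_K(u)\}\), which is the single point \(X(u)\). This contradicts the assumption that the peaks are not all identical.

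The only step requiring care is this interior claim, and the supporting-hyperplane version is the one I would write out, as it avoids the case split.
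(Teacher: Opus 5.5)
Your proposal is correct and follows the paper's proof: continuity of the support map by compactness and uniqueness of the maximizer, the same one-term IC computation, the diagonal identity, and interiority of a nontrivial convex combination of distinct points of a strictly convex body. Your exposed-face argument makes that last step explicit when there are more than two peaks. The case split on $\lambda_j$ and the unused choice of $q$ can be dropped, since the interior point $G(p)$ cannot equal any agent's boundary peak.
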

\begin{proof}
Continuity follows from continuity of the support map,
proved in Appendix~\ref{app:geometry}. Under a unilateral
deviation, the truthful-minus-deviating utility difference is
\[
            \lambda_i p_i\cdot
            \bigl[x_K(p_i)-x_K(r_i)\bigr]\geq0.
\]
On the diagonal, all summands are the same support point.
A nontrivial convex combination of distinct points of a
strictly convex body lies in its interior. With two
positive weights, choose their support points distinct;
the resulting interior point cannot equal any agent's
boundary peak. The rule is therefore non-dictatorial.
\end{proof}

This is an average of physical outcomes, not an assumption
about richer reports or additional preference rankings.
It can use everyone's report with positive weight.
The comparison shows why the restriction to a single
boundary point matters: it rules out interior
compromise even when that compromise is truthful.

\section{Proof}\label{sec:menus}

This section gives the logical structure of the classification proof.
The appendices supply the complete arguments, including the measure
and exceptional-set steps used to reach every profile of the original
rule. We first analyze a binary rule, meaning a rule for two participants.
At opponents' reports \(p_{-i}\), agent \(i\)'s attainable menu is
\[
 M_i(p_{-i})=\{F(r_i,p_{-i}):r_i\in S^{d-1}\}.
\]
IC makes truthful choice a support maximizer
of that menu. Convexifying the menu does not
change any linear support value. The proof
uses the two agents' incentive restrictions
jointly to determine which convexified menus
can coexist.

For sets \(A,B\subset\mathbb R^d\), their Minkowski sum is
\(A+B=\{a+b:a\in A,\ b\in B\}\). A compact convex set \(C\)
is a Minkowski summand of \(K\) if \(K=C+D\) for some compact
convex set \(D\).

\subsection{Menus and planar projections}

\begin{lemma}\label{lem:exposure}
The support function \(h_K\) is \(C^1\) off the
origin, with \(\nabla h_K(P)=X(P)\).
The map \(x_K\) is continuous and onto
\(\partial K\). For every \(\eta>0\), the
support loss \(h_K(p)-p\cdot y\) has a
strictly positive lower bound over unit
\(p\) and \(y\in K\) with
\(\|y-x_K(p)\|\geq\eta\), whenever this set
is nonempty. Every two-dimensional
orthogonal projection of \(K\) is a
full-dimensional strictly convex body
in its plane.
\end{lemma}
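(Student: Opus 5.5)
We prove the four assertions of Lemma~\ref{lem:exposure} in turn, starting from the standard convex-analysis description of the support function. The function \(h_K\) is convex, finite and positively homogeneous on \(\mathbb R^d\). Its subdifferential at \(P\ne0\) is the exposed face \(\operatorname*{argmax}_{y\in K}P\cdot y\). This is a direct check: \(y\) lies in the subdifferential exactly when \(h_K(Q)\geq h_K(P)+y\cdot(Q-P)\) for all \(Q\). Taking \(Q=0\) and \(Q=2P\) forces \(P\cdot y=h_K(P)\), and then \(h_K(Q)\geq Q\cdot y\) for all \(Q\) forces \(y\in K\). Strict convexity makes this face the singleton \(\{X(P)\}\). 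A finite convex function with singleton subdifferential on an open set is differentiable there, and its gradient is continuous, since convex functions differentiable on an open set are \(C^1\). Hence \(h_K\) is \(C^1\) off the origin with \(\nabla h_K=X\), and \(x_K=X|_{S^{d-1}}\) is continuous.

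For ontoness we use the supporting hyperplane theorem. At any \(y\in\partial K\) there is a unit \(p\) with \(p\cdot y=h_K(p)\), so \(y\) is a maximizer, and by uniqueness \(x_K(p)=y\). Conversely, \(x_K(p)\in\partial K\) because a maximizer of a nonzero linear functional cannot be interior.

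The uniform support-loss bound comes from a compactness argument, which is the only step needing care. Let \(A_\eta=\{(p,y)\in S^{d-1}\times K:\|y-x_K(p)\|\geq\eta\}\). This set is compact, since it is closed by continuity of \(x_K\) (just proved) and bounded. The loss \((p,y)\mapsto h_K(p)-p\cdot y\) is continuous, because \(h_K\) is continuous. It is strictly positive on \(A_\eta\), since a zero would make \(y\) a second maximizer distinct from \(x_K(p)\). A continuous positive function on a nonempty compact set has a positive minimum, which gives the bound.

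Finally, let \(\pi\) be the orthogonal projection onto a two-plane \(E\). The image \(\pi K\) is compact and convex. It is full-dimensional in \(E\) because \(\pi\) is onto \(E\) and maps the nonempty open set \(\operatorname{int}K\) to a set with nonempty relative interior (\(\pi\) is an open map). For strict convexity we work with support functions. For \(u\in E\setminus\{0\}\) we have \(h_{\pi K}(u)=h_K(u)\), and the maximizers in \(\pi K\) are the images of the maximizers in \(K\), which form the single point \(\pi X(u)\). A compact convex planar body whose every exposed face is a singleton has no boundary segment: any such segment would lie in a supporting line, and that line's exposed face would contain the segment. Therefore \(\pi K\) is strictly convex in \(E\).
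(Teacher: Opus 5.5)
Your proof is correct, but it reaches the first two claims by a different route from the paper. The support-loss bound (compactness of $A_\eta$ plus uniqueness of the maximizer) and the projection step (exposed faces of $\pi K$ are projections of the singleton faces of $K$, and a boundary segment would lie in an exposed face) are essentially the paper's argument.

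The difference is in the $C^1$ and ontoness claims. You import convex analysis:
\begin{itemize}
\item the subdifferential of $h_K$ is the exposed face;
\item a singleton subdifferential gives differentiability;
\item a convex function differentiable on an open set is automatically $C^1$.
\end{itemize}
Continuity of $x_K$ then comes out as a corollary. The paper runs the logic in the opposite order and stays elementary. It first proves continuity of $X$ directly: any subsequential limit of $X(P_j)$ maximizes $P$, so it equals $X(P)$. It then reads off differentiability from the two-sided sandwich $v\cdot X(P)\le h_K(P+v)-h_K(P)\le v\cdot X(P+v)$.

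Similarly, for ontoness you cite the supporting hyperplane theorem. Your step ``$h_K(Q)\geq Q\cdot y$ for all $Q$ forces $y\in K$'' is also a separation statement. The paper instead builds supporting normals as limits of nearest-point directions and deliberately avoids general separation theorems.

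What each buys: your version is shorter and more conceptual, but it leans on standard results such as Rockafellar's theorems on differentiability of convex functions. The paper's version needs nothing beyond compactness and the nearest-point inequality. You also justify full-dimensionality of the projection via openness of the surjection $\pi$, which the paper simply asserts.

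One thing the paper does inside this proof that your proposal does not: it converts the positive lower bound into a modulus $\omega_K(\delta)\to0$ controlling $\|F(p,q)-x_K(p)\|$, which later appendices rely on. That lies outside the lemma as stated and follows from your bound by the same comparison.
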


This exposure property uses compactness and
uniqueness of the supporting point, rather
than a positive curvature bound.
For a body-valued binary IC rule with diagonal
\(x_K\), reporting the opponent's type gives
\[
 0\leq h_K(p)-p\cdot F(p,q)
       \leq h_K(p)-p\cdot x_K(q).
\]
Thus \(F(p,q)\) approaches \(x_K(p)\)
uniformly as \(q\) approaches \(p\).

\begin{lemma}\label{lem:ambient-menus}
For a boundary-valued binary IC rule,
extend reports homogeneously by
\(F(P,Q)=F(P/\|P\|,Q/\|Q\|)\) for nonzero \(P,Q\).
At any fixed \(Q\neq0\), set
\[
 M_Q=\{F(P,Q):P\neq0\},\qquad
 C_Q=\operatorname{conv}M_Q.
\]
Both sets are compact, truthful choice
maximizes \(P\cdot y\) over \(C_Q\), and
\[
 M_Q=C_Q\cap\partial K,\qquad
 C_Q=\operatorname{conv}(C_Q\cap\partial K),\qquad
 \operatorname{ext}C_Q\subset M_Q.
\]
With diagonal unanimity, \(X(Q)\in M_Q\).
The row analogues hold as well.
\end{lemma}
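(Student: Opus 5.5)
The plan is to prove each claim of Lemma~\ref{lem:ambient-menus} directly from the incentive inequality~\eqref{eq:ic}, strict convexity, and Lemma~\ref{lem:exposure}. First, the homogeneous extension is well defined and leaves the menu unchanged: \(M_Q=\{F(r,Q/\|Q\|):r\in S^{d-1}\}\), which is the column menu of the original rule at the normalized opponent report. For the maximization claim, fix \(P\neq0\) and put \(p=P/\|P\|\). The IC inequality for agent~1 says \(p\cdot F(p,q)\geq p\cdot y\) for every \(y\in M_Q\). Multiplying by \(\|P\|>0\), truthful choice maximizes \(P\cdot y\) over \(M_Q\). A linear functional has the same supremum over a set and over its convex hull. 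So \(F(P,Q)\) maximizes \(P\cdot y\) over \(C_Q\), and
\[
 h_{C_Q}(P)=P\cdot F(P,Q)\qquad\text{for all } P\neq0 .
\]

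Compactness is the main obstacle, because \(F\) is arbitrary and \(M_Q\) is the image of a possibly discontinuous map. To show that \(M_Q\) is closed, I would take \(y\) in the closure of \(M_Q\). Since \(\partial K\) is closed, \(y\in\partial K\). By Lemma~\ref{lem:exposure}, \(x_K\) is onto, so choose a unit \(p\) with \(x_K(p)=y\). Then \(h_K(p)=p\cdot y\). Because \(M_Q\subset K\) and \(y\) lies in the closure of \(M_Q\), we get \(\sup_{M_Q}p\cdot z=h_K(p)\). The truthful report attains this supremum, so \(p\cdot F(p,Q)=h_K(p)\). Uniqueness of the support point then gives \(F(p,Q)=x_K(p)=y\), hence \(y\in M_Q\). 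Thus \(M_Q\) is a closed subset of compact \(K\), so it is compact, and \(C_Q\) is compact as the convex hull of a compact set in finite dimension.

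For the identities, first take \(e\in\operatorname{ext}C_Q\). The hull of a compact set has its extreme points inside that set, so \(e\in M_Q\). Next I show \(C_Q\cap\partial K\subset M_Q\); the reverse inclusion holds because \(M_Q\subset\partial K\). Let \(y\in C_Q\cap\partial K\) and choose a unit \(p\) with \(x_K(p)=y\). Then \(h_{C_Q}(p)\geq p\cdot y=h_K(p)\geq h_{C_Q}(p)\), since \(C_Q\subset K\). So the truthful outcome \(F(p,Q)\in K\) attains \(h_K(p)\), and it must equal \(x_K(p)=y\). Hence \(y\in M_Q\), which gives \(M_Q=C_Q\cap\partial K\). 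Then \(C_Q=\operatorname{conv}M_Q=\operatorname{conv}(C_Q\cap\partial K)\). Alternatively, Krein--Milman gives \(C_Q=\operatorname{conv}\operatorname{ext}C_Q\), and \(\operatorname{ext}C_Q\subset M_Q\).

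Finally, diagonal unanimity gives \(F(Q,Q)=x_K(Q/\|Q\|)=X(Q)\), so \(X(Q)\in M_Q\). The row analogues follow by exchanging the roles of the two agents and using agent~2's IC inequality; the argument is symmetric.
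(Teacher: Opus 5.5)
Your proof is correct and follows essentially the same route as the paper. Closedness of \(M_Q\) comes from a supporting normal at a limit point together with uniqueness of the support point; the hull is then compact, extreme points of a hull lie in the generating set, and the row case follows by exchanging labels. The only minor difference is in the inclusion \(C_Q\cap\partial K\subset M_Q\): you use the IC identity \(h_{C_Q}(p)=p\cdot F(p,Q)\), whereas the paper argues purely geometrically from a convex-combination representation of \(y\) and a support normal of \(K\) at \(y\). Both arguments are valid.
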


Appendix~\ref{app:geometry} proves these facts.
For the intermediate body-valued argument
we instead define a menu as the closed
convex hull of the actual image. It is
compact inside \(K\); closedness of
an interior-valued actual image is
not assumed.

If both reports lie in a two-plane, their
utilities depend only on the orthogonal
projection of the outcome. This produces a
rule valued in the projected body, even
though the original outcome need not lie
in that plane. The appropriate planar
statement is therefore body-valued.

\begin{lemma}\label{lem:planar-body}
Let \(L\subset\mathbb R^2\) be compact,
full-dimensional and strictly convex.
Suppose \(g:(S^1)^2\to L\) is individually
IC in both coordinates and
\(g(p,p)=x_L(p)\). Then \(g\) is jointly
continuous, and every closed convex
column or row menu is a Minkowski
summand of \(L\).
Writing \(t(a)=(-\sin a,\cos a)\) and
\(\mu_L=D^2h_L+h_L\,da\), every angular
section satisfies
\[
                         Dg=t\,\nu,\qquad
                         0\leq\nu\leq\mu_L
\]
as vector measures.
The curvature measure \(\mu_L\) is atomless,
but can have a singular-continuous part.
For the unit disk, the section is
absolutely continuous with derivative
\(\lambda(a)t(a)\), \(0\leq\lambda\leq1\)
almost everywhere.
\end{lemma}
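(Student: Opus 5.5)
We work throughout in angular coordinates. Write \(p(a)=(\cos a,\sin a)\) and \(t(a)=p'(a)\). A planar compact convex set \(C\) has a \(2\pi\)-periodic support function \(h_C(a)=h_C(p(a))\). Its curvature measure \(\mu_C=D^2h_C+h_C\,da\) is a nonnegative periodic measure. Outside the angles where \(C\) has an edge, the support-point map is \(x_C(a)=h_C(a)p(a)+h_C'(a)t(a)\), and \(Dx_C=t\,\mu_C\) as vector measures. Atoms of \(\mu_C\) correspond exactly to edges, so strict convexity of \(L\) makes \(\mu_L\) atomless. Nothing excludes a singular-continuous (Cantor-type) part. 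For the unit disk, \(h_L\equiv1\) and \(\mu_L=da\).

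I will use the standard planar criterion: a compact convex \(C\) is a Minkowski summand of \(L\) if and only if \(\mu_C\le\mu_L\). Given the inequality, the density \(h_L-h_C\) has nonnegative curvature measure \(\mu_L-\mu_C\). Its total vector moment \(\int p\,d(\mu_L-\mu_C)\) vanishes because both closed curves close up. Hence \(h_L-h_C\) is itself a support function, and the lemma reduces to the measure inequality.

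\textbf{Step 1: sections are support-point selections.} Fix a column \(b\) and let \(C_b\) be the closed convex hull of \(\{g(a,b)\}_a\). Individual IC of agent 1 gives \(p(a)\cdot g(a,b)\ge p(a)\cdot g(a',b)\) for all \(a'\). Passing to the closed convex hull preserves this, so \(g(a,b)\) maximizes \(p(a)\cdot y\) over \(C_b\). Thus \(a\mapsto g(a,b)\) is a selection of the support-point correspondence of \(C_b\). It coincides with \(x_{C_b}\) off the countably many edge directions of \(C_b\), and at an edge it picks a point of that edge. Consequently the section has bounded variation and \(D_ag(\cdot,b)=t\,\nu_b\) with \(\nu_b=\mu_{C_b}\ge0\). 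Indeed \(\mu_{C_b}\) assigns the edge length to an edge angle, and any selection jumps along \(t(a)\) by exactly that length, since the selection is the cumulative sum of the arc. Row sections are handled symmetrically through agent 2. This yields \(Dg=t\,\nu\), \(\nu\ge0\), and reduces the Minkowski-summand claim to the bound \(\nu_b\le\mu_L\).

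\textbf{Step 2: the bound \(\nu_b\le\mu_L\) (main obstacle).} Near the diagonal the estimate is easy. Split
\[
 x_L(a)-x_L(a_0)=\bigl[g(a,a)-g(a_0,a)\bigr]+\bigl[g(a_0,a)-g(a_0,a_0)\bigr].
\]
Each bracket is an integral of \(t\) against a nonnegative measure: the first a column measure, the second a row measure. Project onto \(t(a_0)\) and use \(t(s)\cdot t(a_0)=\cos(s-a_0)\approx1\) on short arcs. This gives the following estimate, up to factors tending to \(1\) as the arc shrinks:
\[
 \nu^{\mathrm{col}}_a\bigl([a_0,a]\bigr)+\nu^{\mathrm{row}}_{a_0}\bigl([a_0,a]\bigr)\le\mu_L\bigl([a_0,a]\bigr).
\]
The real difficulty is transporting such a bound from the diagonal to an arbitrary column \(b\) far from \(a\). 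I would try a joint support-envelope argument on a rectangle \([a_0,a]\times[b_0,b]\). Go around its boundary by alternating column and row moves; each move is a \(t\)-weighted nonnegative measure. Row IC for agent 2 forces the column measures \(\nu_b\) and \(\nu_{b'}\) to be comparable on the interval of angles where the row menus do not move. Combining this with the diagonal estimate, I would show that \(b\mapsto\nu_b(I)\) cannot exceed \(\mu_L(I)\) for any interval \(I\): otherwise the excess propagates to the diagonal and contradicts the displayed inequality. I expect this propagation, which must hold for every \(b\) and not merely almost every \(b\), to be the hardest part. It also needs care at edge angles of the menus.

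\textbf{Step 3: continuity and the disk.} Once \(\nu\le\mu_L\) holds for every section, the oscillation of \(g\) along any coordinate arc \(I\) is at most \(\mu_L(I)\). Because \(\mu_L\) is atomless, \(\mu_L(I)\to0\) uniformly as \(|I|\to0\). The sections are therefore equicontinuous in each variable separately, and the triangle inequality over one column move plus one row move gives joint continuity. For the unit disk, \(\mu_L=da\), so \(0\le\nu\le da\). Hence \(\nu\) is absolutely continuous with density \(\lambda\in[0,1]\) almost everywhere, and \(Dg=\lambda(a)t(a)\,da\).
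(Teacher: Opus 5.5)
Your Steps 1 and 3 are sound. Column IC alone makes each section a support selection of $C_b$, so $D_ag(\cdot,b)=t\,\mu_{C_b}$ with $\mu_{C_b}\ge0$. Once $\mu_{C_b}\le\mu_L$ is known for every $b$, the planar summand criterion, the uniform modulus of the atomless $\mu_L$, joint continuity and the disk case follow as you say. But that bound is the whole content of the lemma, and Step 2 does not prove it. Your diagonal estimate controls only column $a$ on the short arc ending at its own diagonal point. The proposed propagation (\emph{comparable where the row menus do not move}, \emph{the excess propagates to the diagonal}) is not an argument.

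The missing idea is that the mixed derivative vanishes off the parallel set. The envelope identity $\partial_a h_{C_b}(a)=t(a)\cdot g(a,b)$ and its row analogue give $p(a)\cdot\partial_ag=0$ and $p(b)\cdot\partial_bg=0$ weakly. Differentiating each in the other variable shows that $\partial_a\partial_bg$ is orthogonal to both $p(a)$ and $p(b)$, hence zero wherever $\sin(a-b)\neq0$. On each strip $\{0<\pm(a-b)<\pi\}$ this gives $g=A_\pm(a)+B_\pm(b)$ with $DA_\pm=t\,\alpha_\pm$, $DB_\pm=t\,\beta_\pm$ and $\alpha_\pm,\beta_\pm\ge0$. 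So the column measure is $\alpha_+$ on $(b,b+\pi)$ and $\alpha_-$ on $(b-\pi,b)$ for \emph{every} opponent $b$: it is not merely comparable across columns but independent of them. Passing to the diagonal then gives $A_\pm+B_\pm=x_L$, hence $\alpha_\pm+\beta_\pm=\mu_L$ and $\alpha_\pm\le\mu_L$. The paper reaches the null diagonal through offsets $(r+\varepsilon,r)$ and an $L^1_{\mathrm{loc}}$ limit. With the invariance in hand, your two-bracket identity would serve as a pointwise form of the same relation.

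Two further ingredients are also absent from your plan. First, any such two-variable distributional argument needs joint measurability and local integrability of $g$, which are not assumed. The paper derives them from the cone bound on one-coordinate increments: the $p(m)$-component is at most $\tan(u/2)$ times the nonnegative $t(m)$-component, where $m$ is the midpoint angle. This makes two independent linear scalarizations monotone in both oriented coordinates on small boxes, hence Riemann integrable. The resulting decomposition holds only almost everywhere, so the actual values must then be recovered at every nonparallel profile. The paper does this with monotone sandwiches between good points in rectangles of positive area; this is what settles your own concern about every $b$ rather than almost every $b$.

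Second, the strips end at the antipodal seam $a=b\pm\pi$, where neither the strip decomposition nor your diagonal estimate says anything. There $\mu_{C_b}$ could a priori carry an atom at $b+\pi$, meaning $C_b$ has an edge with outer normal $-p(b)$. Since $\mu_L$ is atomless, that would violate both $\nu_b\le\mu_L$ and continuity. The paper excludes it by an argument across columns. Both one-sided seam traces $L_\pm(b)$ maximize $-p(b)$ over $C_b$, so $J=L_+-L_-$ satisfies $p(b)\cdot J(b)=0$ for every $b$, while $DJ=t\,\sigma$ is tangential. Hence $(t\cdot J)\,db=D(p\cdot J)=0$, and $J\equiv0$.
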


Appendix~\ref{app:plane} starts with the
incentive inequalities themselves.
Outcome increments lie in narrow
tangent cones. Two independent scalar
projections give local integrability
without a prior measurability assumption.
The two support envelopes then force
the mixed weak derivative to vanish
on each nonparallel angle chamber.

The resulting decomposition has two
one-variable components with positive
tangent derivative measures.
To connect them at the diagonal,
the proof uses legal offset slices
and an \(L^1\) limit; an almost-everywhere
identity cannot simply be restricted
to the diagonal. The two measures
sum to the body's curvature measure.
Monotone sandwiches recover the
original point values, and an
antipodal trace argument closes
the remaining profiles. The proof
keeps singular-continuous curvature
rather than discarding it.

\subsection{Global menu decomposition}

The planar projections still have to
be compatible across columns. For this intermediate step,
we relax only the outcome requirement and consider an auxiliary
body-valued binary IC rule
\(\widetilde F:(S^{d-1})^2\to K\) with
\(\widetilde F(p,p)=x_K(p)\). This notation distinguishes the temporary
body-valued rule from the boundary-valued rule in the theorem. On
\(\Omega=(\mathbb R^d\setminus\{0\})^2\),
define its column and row support
envelopes by
\[
 U(P,Q)=P\cdot \widetilde F(P,Q),\qquad
 V(P,Q)=Q\cdot \widetilde F(P,Q).
\]

At differentiability points, the gradients of the scalar envelopes recover the
chosen outcome. Convexity in the corresponding report makes their Hessians
positive. The next lemma retains these facts as distributional statements when
the auxiliary rule is nonsmooth. Informally, the matrix measures \(A\) and
\(B\) record the monotone movement of the outcome generated by each report.
Their transport identities say that the first report's movement measure is
preserved along the relevant changes in the second report. Contraction toward
the diagonal then compares that measure with \(D^2h_K\); positivity prevents
the other coordinate from cancelling any mass. Integrating the resulting
order shows that \(h_K-h_{C_q}\) is a support function, and continuity extends
the conclusion from regular slices to every actual column. This is the route
from the distributional identities to the Minkowski decomposition below.

\begin{lemma}\label{lem:transport}
The envelopes \(U,V\) are jointly
continuous. The auxiliary rule is
measurable for completed Lebesgue
measure on \(\Omega\), and
\[
 A=D_P\widetilde F=D^2_{PP}U,\qquad
 B=D_Q\widetilde F=D^2_{QQ}V
\]
are symmetric positive semidefinite
matrix Radon measures.
On linearly independent pairs,
\(A\) is invariant under
\((P,Q)\mapsto(P,Q+tP)\) and
\((P,Q)\mapsto(P,\epsilon Q)\),
\(\epsilon>0\). In particular, for
\(\Phi_\epsilon(P,Q)=(P,P+\epsilon Q)\),
the componentwise scalar-distribution
pullback satisfies \(\Phi_\epsilon^*A=A\).
\end{lemma}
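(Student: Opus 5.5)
The plan is to work entirely with the scalar envelopes and to get every regularity fact from convexity in one report, not from any property of \(\widetilde F\) itself. By IC, \(U(P,Q)=\max_{y\in C_Q}P\cdot y=h_{C_Q}(P)\), where \(C_Q\) is the closed convex hull of the column image (the body-valued menu). So for fixed \(Q\), \(U(\cdot,Q)\) is convex, positively \(1\)-homogeneous and Lipschitz with constant \(\max_{y\in K}\|y\|\), uniformly in \(Q\). Symmetrically, \(V(P,\cdot)=h_{R_P}\) for the row menu. Hence joint continuity of \(U\) reduces to continuity in \(Q\) at fixed \(P\).

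For continuity in \(Q\), project onto \(\Pi=\operatorname{span}(P,Q)\). The two agents' utilities depend only on the orthogonal projection \(\pi_\Pi\widetilde F\). That projection is a planar body-valued IC rule on \(S^1\times S^1\subset\Pi\), with diagonal \(\pi_\Pi x_K(p)=x_{\pi_\Pi K}(p)\) for \(p\in\Pi\). The body \(\pi_\Pi K\) is strictly convex by Lemma~\ref{lem:exposure}, so Lemma~\ref{lem:planar-body} gives joint continuity of this planar rule, and hence continuity of \(U\) along \(Q\) moving inside \(\Pi\). To vary \(Q\) transversally, I would combine the uniform Lipschitz bound in \(P\) with the IC sandwich for agent~2:
\[
Q\cdot\widetilde F(P,Q)\ \ge\ Q\cdot\widetilde F(P,Q').
\]
This sandwich, together with the exposure modulus of Lemma~\ref{lem:exposure}, should yield equicontinuity of \(Q\mapsto U(P,Q)\). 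I expect this transversal step, passing between different planes, to be the delicate part of continuity.

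For measurability, note that \(\widetilde F(P,Q)=\nabla_PU(P,Q)\) wherever \(U(\cdot,Q)\) is differentiable, by the envelope identity: the maximizer of \(P\cdot y\) over \(C_Q\) is the gradient of \(h_{C_Q}\). Since \(U\) is jointly continuous, \(\nabla_PU\) is a Borel limit of difference quotients on the Borel set where it exists. By Rademacher on each \(P\)-slice and Fubini, the complement of that set is Lebesgue null. So \(\widetilde F\) agrees with a Borel map off a null set, which gives measurability for the completed measure.

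Next, \(D^2_{PP}U\) is the distributional Hessian in \(P\) of a function convex in \(P\) and continuous jointly. Tested against nonnegative test functions and any fixed vector \(u\), it is nonnegative, so by the Riesz representation theorem it is a symmetric positive semidefinite matrix Radon measure. It equals \(D_P\widetilde F\) because \(\widetilde F=\nabla_PU\) almost everywhere. The same argument with \(V\) gives \(B\).

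For the invariance of \(A\), first observe that scaling by \(\epsilon>0\) is immediate: \(\widetilde F(P,\epsilon Q)=\widetilde F(P,Q)\) by homogeneous extension. For the shear \(Q\mapsto Q+tP\), the plan is to show that the mixed derivative \(D_Q\,D_P\widetilde F\) vanishes in the directions \(P\) and \(Q\) on linearly independent pairs. Lemma~\ref{lem:planar-body} gives, on each nonparallel angle chamber of \(\Pi\), the decomposition of the planar rule into two one-variable components. Consequently the \(P\)-derivative measure of \(\pi_\Pi\widetilde F\) does not change as \(Q\) moves within its chamber, and \(Q+tP\) and \(\epsilon Q\) stay in the chamber of \((P,Q)\). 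The remaining issue is that entries \(u^{\top}Au\) with \(u\notin\Pi\) involve second differences \(U(P+su,Q)\) in a three-dimensional span. I would handle these by polarization, writing \(u^{\top}Au\) through planar sections \(\operatorname{span}(P+su,Q)\), each of which the planar lemma controls, and then pass the identity to the limit \(s\to0\) in the distributional sense. Composing the two invariances gives \(\Phi_\epsilon^*A=A\). I expect this lifting from planar invariance to the full matrix measure, and the transversal continuity above, to be the main obstacles.
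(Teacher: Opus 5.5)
Your Lipschitz reduction, the measurability argument via \(\widetilde F=\nabla_PU\) off a Borel null set, the positive Hessians and the \(\epsilon\)-scaling all match the paper. The two steps you flag, however, are genuine gaps, and the tools you name cannot close them.

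For transversal continuity, the agent-2 inequality \(Q\cdot\widetilde F(P,Q)\ge Q\cdot\widetilde F(P,Q')\) only makes \(V(P,\cdot)=h_{R_P}\) the support function of the row menu \(R_P\). It controls the \(Q\)-inner product, not \(U(P,Q)=P\cdot\widetilde F(P,Q)\). As \(Q_j\to Q\), you can only place cluster points of \(\widetilde F(P,Q_j)\) in the exposed face \(E_{R_P}(Q)\). If \(P\cdot y\) varied on that face, \(U\) could jump. The exposure modulus of Lemma~\ref{lem:exposure} controls only near-diagonal profiles, and joint continuity of the planar rule in \(\operatorname{span}(P,Q)\) never compares different planes.

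The missing idea is cross-coordinate face constancy. It uses the Minkowski-summand half of Lemma~\ref{lem:planar-body}, not its continuity half. Take \(L=\operatorname{span}(P,Q)\). The orthogonal projection \(\Pi_LR_P\) equals the planar row menu: for \(u\in L\), the truthful report \(u\) already maximizes the whole ambient menu, so the support functions agree on \(L\). Hence \(\Pi_LR_P\) is a summand of the strictly convex \(K_L\) and has singleton exposed faces. Two points of \(E_{R_P}(Q)\) therefore have the same projection to \(L\ni P\), hence the same \(P\)-inner product, and the cluster-point argument gives \(U(P,Q_j)\to U(P,Q)\). Parallel pairs, where \(\operatorname{span}(P,Q)\) is a line, need every plane through the common axis. \(V\) is treated the same way with column faces.

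For the shear, the planar decomposition only describes, for each fixed plane \(L\), the in-plane block of the derivative of \(\Pi_L\widetilde F\) on the slice \(L\times L\). That slice is null in \(\Omega\) when \(d\ge3\), and the decomposition says nothing about \(P\)-directions leaving \(L\) or about components orthogonal to \(L\). Your polarization does not repair this. When \(u\notin\operatorname{span}(P,Q)\) and \(s,t\ne0\), the reports \((P\pm su,Q+tP)\) and \((P\pm su,Q)\) span different planes. So no single planar decomposition links the second differences at \((P,Q+tP)\) and \((P,Q)\), and restricting distributional identities to such slices is itself unjustified.

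The paper's key step here is algebraic. The third-order distribution \(T_{ijk}=\partial_{Q_k}\partial_{P_j}\widetilde F_i\) is fully symmetric: symmetry of \(A\) swaps \(i,j\), and symmetry of \(B\) together with commuting weak derivatives swaps \(i,k\). Degree-zero homogeneity in \(P\) gives \(\sum_\ell P_\ell\partial_{P_\ell}\widetilde F_i=0\) weakly, obtained by a radial substitution in the test integral rather than a classical chain rule. Differentiating in \(Q_j\) and using the symmetry yields the transport equation \((P\cdot\nabla_Q)A=0\) on independent pairs. The map \(S_t(P,Q)=(P,Q+tP)\) preserves independent pairs and has divergence-free generator \((0,P)\). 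This gives \(\frac{d}{dt}\langle S_t^*A,\phi\rangle=0\), i.e.\ \(S_t^*A=A\). Note that this argument uses the symmetry of \(B\), which your proposal never exploits.
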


The joint mixed-derivative symmetries
and homogeneity generate this
transport. Appendix~\ref{app:transport}
specifies its distributional and
Jacobian conventions. Contracting
towards the diagonal yields
\(\widetilde F\circ\Phi_\epsilon\to X\)
locally uniformly and
\[
 D_P(\widetilde F\circ\Phi_\epsilon)
                 =A+\Phi_\epsilon^*B.
\]
Both terms are positive. The diagonal
therefore bounds each menu's measure
by the same environmental support
measure:
\[
                 0\preceq A\preceq
                 (D^2h_K)\otimes\mathcal L_Q^d.
\]

\begin{proposition}\label{prop:minkowski}
For the body-valued binary IC rule
with diagonal \(x_K\), every fixed
opponent \(q\) has a compact convex
complement \(D_q\) such that
\[
 K=C_q+D_q,\qquad
 C_q=\overline{\operatorname{conv}}
       \{\widetilde F(p,q):p\in S^{d-1}\}.
\]
The same conclusion holds for every row.
\end{proposition}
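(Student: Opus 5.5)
The plan is to show that \(h_K-h_{C_q}\) is a support function, i.e.\ a sublinear function on \(\mathbb R^d\). Then \(D_q\) will be the compact convex set it supports. Since \(h_{C_q+D_q}=h_{C_q}+h_{D_q}=h_K\) and compact convex sets are determined by their support functions, this gives \(K=C_q+D_q\). Positive homogeneity of degree one is automatic, because both \(h_K\) and \(h_{C_q}\) are homogeneous. The real task is convexity of \(h_K-h_{C_q}\) in \(P\), which for a continuous homogeneous function is equivalent to \(D^2(h_K-h_{C_q})\succeq0\) distributionally on \(\mathbb R^d\setminus\{0\}\).

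First I identify \(h_{C_q}\) with the column envelope. IC says truthful choice maximizes \(P\cdot y\) over the actual image at \(q\), hence over its closed convex hull. So \(h_{C_q}(P)=U(P,q)\), extending homogeneously. By Lemma~\ref{lem:transport}, \(U\) is jointly continuous, and \(D^2_{PP}U=A\) as a matrix Radon measure on \(\Omega\). The matrix-measure bound stated after Lemma~\ref{lem:transport},
\[
0\preceq A\preceq (D^2h_K)\otimes\mathcal L^d_Q ,
\]
then says that \((D^2h_K)\otimes\mathcal L^d_Q-D^2_{PP}U\) is positive on \(\Omega\). Testing against \(\varphi(P)\psi(Q)\) with \(\psi\ge0\) shows that \(P\mapsto \int\bigl(h_K(P)-U(P,Q)\bigr)\psi(Q)\,dQ\) is convex. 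This mollified function is the difference between \(h_K\) and an averaged envelope.

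Next I pass from averages in \(Q\) to the fixed column \(q\). Choose \(\psi_k\ge0\) with \(\int\psi_k=1\) and supports shrinking to \(q\). By joint continuity of \(U\), the averages \(\int U(P,Q)\psi_k(Q)\,dQ\) converge to \(U(P,q)\) locally uniformly in \(P\). Locally uniform limits of convex functions are convex, so \(h_K-U(\cdot,q)\) is convex on \(\mathbb R^d\setminus\{0\}\). The point \(P=0\) causes no trouble: a positively homogeneous function that is convex on each segment avoiding the origin, and continuous at the origin with value \(0\), is sublinear. Segments through the origin are handled directly by homogeneity, or by approximating them with nearby segments. Row menus follow from the same argument applied to \(V\) and \(B\), with the roles of the coordinates exchanged.

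The main obstacle is justifying the inequality \(A\preceq (D^2h_K)\otimes\mathcal L_Q^d\) on all of \(\Omega\), including linearly dependent pairs where the transport invariance of Lemma~\ref{lem:transport} is not asserted. I would argue it from \(D_P(\widetilde F\circ\Phi_\epsilon)=A+\Phi_\epsilon^*B\), where both terms are positive. As \(\epsilon\to0\), \(\widetilde F\circ\Phi_\epsilon\to X=\nabla h_K\) locally uniformly, so the left side converges distributionally to \((D^2h_K)\otimes\mathcal L^d_Q\). Dropping the positive term \(\Phi_\epsilon^*B\) gives \(A=\Phi_\epsilon^*A\preceq D_P(\widetilde F\circ\Phi_\epsilon)\), and the inequality survives the limit against nonnegative test matrices. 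The dependent pairs form a Lebesgue-null closed set. I expect no mass issue there, because the bound is tested only against smooth functions and \(A\) is first defined on the open set of independent pairs. If needed, I would shrink the support of \(\psi_k\) so that it avoids the line \(\mathbb R q\) except on a null set; since the convexity is obtained in the limit anyway, this is harmless.
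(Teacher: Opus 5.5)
Your architecture is the paper's (Appendix~\ref{app:budget}): the contraction $\Phi_\epsilon$ gives $A\preceq\mathsf K$ with $\mathsf K=(D^2h_K)\otimes\mathcal L^d_Q$. Testing in $P$ and $Q$, together with joint continuity of $U$, then reaches every actual column. Local convexity plus the origin argument makes $h_K-h_{C_q}$ a support function. The gap is at the step you flag as the main obstacle, and your way of disposing of it is not valid.

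The contraction argument gives the budget only on the open set $\mathcal O$ of independent pairs, because that is where $\Phi_\epsilon^*A=A$ holds. But $A=D^2_{PP}U$ is a matrix Radon measure on all of $\Omega$; it is not first defined on $\mathcal O$. It may carry singular mass on the parallel set $Z=\Omega\setminus\mathcal O$. Lebesgue-nullity of $Z$ says nothing about this mass, and a smooth test whose support meets $Z$ sees it. To get convexity of $h_K-U(\cdot,q)$ near the line $\mathbb R q$, you must use tests $\varphi(P)\psi_k(Q)$ with $\varphi$ supported near $\mathbb Rq$. Their supports meet $Z$ however tightly $\operatorname{supp}\psi_k$ is shrunk around $q$. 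Making $\psi_k$ avoid $\mathbb Rq$ in the $Q$-variable is beside the point, since the obstruction is pairs with $P\parallel Q$, which occur for every $Q$ in that support.

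For $d\ge3$ this is repairable in either of two ways:
\begin{enumerate}
\item As the paper does: the cutoff estimate \eqref{eq:parallel-cutoff} gives $(\operatorname{tr}A)(Z)=0$, and Tonelli gives $\mathsf K(Z)=0$.
\item Take $\varphi$ supported off $\mathbb Rq$ only, so that the product supports lie in $\mathcal O$. This gives local convexity of $h_K-h_{C_q}$ on $\mathbb R^d\setminus\mathbb Rq$. Since almost every translate of a segment misses a line when $d\ge3$, continuity then gives convexity on all of $\R^d$; this also handles the origin.
\end{enumerate}

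Neither route works for $d=2$. There the estimate is only $O(1)$, and $\mathbb R^2\setminus\mathbb Rq$ has two components. Concretely, nothing in your argument excludes a jump of $\widetilde F(\cdot,q)$ as the report crosses the antipodal direction $-q$, that is, an edge of $C_q$ with outer normal $-q$. Such jumps, occurring for $Q$ near $q$, are exactly what singular mass of $A$ on $Z$ encodes. They are invisible to the budget on $\mathcal O$, and they put a concave kink into $h_K-h_{C_q}$ along the ray through $-q$. The ray through $q$ itself is harmless by the incentive gap bound.

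The paper treats $d=2$ as a genuinely separate branch: the rule itself satisfies Lemma~\ref{lem:planar-body}. That lemma's antipodal trace argument \eqref{eq:plane-seam-support}--\eqref{eq:plane-seam-measure} shows that the two seam traces coincide. It uses $n(b)\cdot J(b)=0$ for every $b$ together with the tangential form $DJ=t\,\sigma$. Your proposal needs this ingredient, or a substitute for it, in the planar case.
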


For \(d\geq3\), a cutoff around parallel
report pairs removes possible residual
mass on that set. In dimension two,
the planar lemma applies directly.
Joint continuity of the envelopes
then upgrades the distributional
bound to each actual column:
\(h_K-h_{C_q}\) is a support function.
Appendix~\ref{app:budget} provides
the full argument, including singular
matrix measures and the exceptional set.

The decomposition is a joint incentive
restriction. The choice available to
one agent is a summand of the same
feasible body that bounds the other
agent's options. This connection across
reports, together with the fact that actual
menu points lie on \(\partial K\), determines control.

\subsection{Fixed control}

We now return to the boundary-valued rule. The preceding proposition and the
menu identities in Lemma~\ref{lem:ambient-menus} turn the global decomposition
into a pointwise classification.

\begin{proof}[Proof of the binary case]
Fix a nonzero second report \(Q\). Under the homogeneous extension,
\(C_Q=C_{Q/\|Q\|}\). Applying Proposition~\ref{prop:minkowski} to
\(q=Q/\|Q\|\) therefore gives compact convex sets \(C_Q,D_Q\) such that
\[
                         K=C_Q+D_Q.
\]
For a compact convex set \(C\) and a nonzero vector \(u\), write
\(E_C(u)=\arg\max_{y\in C}u\cdot y\) for its exposed face. Support
maximizers add under Minkowski addition, so
\[
                         E_K(u)=E_{C_Q}(u)+E_{D_Q}(u).
\]
Strict convexity makes \(E_K(u)\) a singleton. A sum of two nonempty sets can
be a singleton only if each summand is a singleton. Hence every exposed face
of \(C_Q\) is a singleton.

This property already rules out a nonsingleton lower-dimensional menu. If
\(C_Q\) were lower-dimensional and contained two distinct points, choose a
nonzero vector \(u\) perpendicular to
\(\operatorname{lin}(C_Q-C_Q)\). The functional \(u\cdot y\) is then constant
on \(C_Q\), so \(E_{C_Q}(u)=C_Q\), contradicting the singleton-face property.
Thus a lower-dimensional \(C_Q\) must be a singleton.

Suppose instead that \(C_Q\) is full-dimensional. Since
\(M_Q\subset\partial K\subset K\) and \(K\) is convex,
\(C_Q=\operatorname{conv}M_Q\subset K\). If the inclusion is proper, choose
\(a\in\operatorname{int}C_Q\), which also belongs to
\(\operatorname{int}K\). There is a point
\(b\in\operatorname{int}K\setminus C_Q\). To see this explicitly, take
\(z\in K\setminus C_Q\) and a ball \(B(y_0,r)\subset K\). For \(t>0\), the
point \(z_t=(1-t)z+ty_0\) satisfies
\(B(z_t,tr)\subset K\), and hence is interior; because \(C_Q\) is closed and
\(z\) has positive distance from it, sufficiently small \(t\) also gives
\(z_t\notin C_Q\). We may take \(b=z_t\).

The set
\[
 \{s\in[0,1]:(1-s)a+sb\in C_Q\}
\]
is a closed interval \([0,s_*]\) with \(0<s_*<1\). Its endpoint
\(y=(1-s_*)a+s_*b\) lies on \(\partial C_Q\). It also lies in
\(\operatorname{int}K\): choose \(\epsilon>0\) such that
\(B(a,\epsilon),B(b,\epsilon)\subset K\). Convexity gives
\[
 B((1-s)a+sb,\epsilon)
   =(1-s)B(a,\epsilon)+sB(b,\epsilon)\subset K
\]
for every \(s\in[0,1]\). A nonzero supporting normal
of \(C_Q\) exists at \(y\). Its exposed face is a singleton, so it exposes
\(y\) alone. If \(y\) were a nontrivial convex combination of two points of
\(C_Q\), both points would attain the same support value and would lie in that
singleton face. Thus \(y\in\operatorname{ext}C_Q\).

Lemma~\ref{lem:ambient-menus} gives
\(\operatorname{ext}C_Q\subset M_Q\subset\partial K\). This contradicts
\(y\in\operatorname{int}K\). We have proved that \(C_Q\) is either a singleton
or all of \(K\). In the singleton case, diagonal unanimity gives
\(X(Q)\in M_Q\subset C_Q\), so \(C_Q=\{X(Q)\}\) and every actual value in the
column equals \(X(Q)\). In the other case, \(C_Q=K\). First-coordinate IC says
that \(F(P,Q)\) maximizes \(P\cdot y\) over the actual menu and hence over its
convex hull \(K\). The maximizer over \(K\) is unique, so
\(F(P,Q)=X(P)\) for every \(P\), including reports in any exceptional set used
in the preceding measure argument.

Restrict again to unit reports. Let \(\mathcal S\) be the set of opponent
reports \(r\) whose columns are constant, and let \(\mathcal I\) be the set of
opponent reports \(q\) whose columns select the first participant's peak. The
two sets cover \(S^{d-1}\). They are disjoint because a column of both forms
would make \(X(p)\) constant in \(p\), whereas the support map is onto the
nonsingleton boundary of a full-dimensional body. Suppose both are nonempty. For
\(q\in\mathcal I\), \(r\in\mathcal S\), and every first report \(p\), we have
\[
                         F(p,q)=X(p),\qquad F(p,r)=X(r).
\]
When the second participant's true type is \(q\), IC for the deviation from
\(q\) to \(r\) gives
\[
                         q\cdot X(p)\geq q\cdot X(r)
                         \qquad\text{for every }p.
\]
Set \(p=-q\). The point \(X(-q)\) uniquely minimizes \(q\cdot y\) over \(K\),
so the displayed inequality can hold only if \(X(r)=X(-q)\). Notice that the
inequality is weak and that no conclusion \(r=-q\) is used.

Fix \(q_0\in\mathcal I\) and \(r_0\in\mathcal S\), and put
\(c=X(r_0)=X(-q_0)\). Pairing any \(r\in\mathcal S\) with \(q_0\) gives
\(X(r)=c\); pairing any \(q\in\mathcal I\) with \(r_0\) gives \(X(-q)=c\).
Consequently
\[
 \mathcal S\subset N_K(c)\cap S^{d-1},\qquad
 \mathcal I\subset -N_K(c)\cap S^{d-1},
\]
where
\[
 N_K(c)=\{u\in\mathbb R^d:u\cdot(y-c)\leq0
                       \text{ for every }y\in K\}.
\]
This step uses only the definition of a normal cone and remains valid when
several normals support \(K\) at \(c\).

Choose a closed ball \(\overline B(y_0,\rho)\subset K\), with \(\rho>0\), and
set \(a=c-y_0\). For any nonzero \(u\in N_K(c)\), insert
\(y=y_0+\rho u/\|u\|\) into the defining normal inequality. It gives
\[
                         a\cdot u\geq\rho\|u\|>0.
\]
In particular \(a\ne0\), since \(r_0\in N_K(c)\) is nonzero. Since
\(d\geq2\), choose a unit vector
\(p_*\perp a\). Neither \(p_*\) nor \(-p_*\) can belong to \(N_K(c)\), because
the displayed inequality would give \(0\geq\rho\). Hence
\(p_*\notin N_K(c)\cup[-N_K(c)]\), contradicting the fact that
\(\mathcal S\cup\mathcal I=S^{d-1}\). One of the two classes must therefore be
empty. The binary rule selects \(X(p)\) at every profile or selects \(X(q)\)
at every profile, with a fixed controlling label.
\end{proof}

\subsection{Finite populations}

The binary conclusion extends by induction. The argument below spells out the
unilateral comparisons because this is where a simultaneous deviation must be
avoided.

\begin{proof}[Completion of the proof of Theorem~\ref{thm:frontier}]
For any type \(p\), the point \(X(p)\) is its unique best outcome. The same
point is the unique worst outcome for type \(-p\): if \(y\ne X(p)\), strict
exposure gives \(p\cdot y<p\cdot X(p)\), and therefore
\((-p)\cdot y>(-p)\cdot X(p)\). This statement does not require
\(X(-p)=-X(p)\), which need not hold for a general body.

The one-person case follows immediately from diagonal unanimity, since every
profile is diagonal. The binary case was just proved. Suppose the result holds
for \(n-1\) participants, where \(n\geq3\). Write
\(R=(p_1,\ldots,p_{n-2})\) for the first \(n-2\) reports and merge the last two
reports by defining
\[
                         G(R,q)=F(R,q,q).
\]
The first \(n-2\) coordinates inherit IC directly. To verify IC for the merged
coordinate, take its true type to be \(q\) and its alternative report to be
\(q'\). Ordinary IC applied first to participant \(n-1\) and then to
participant \(n\) gives
\[
 q\cdot F(R,q,q)
 \geq q\cdot F(R,q',q)
 \geq q\cdot F(R,q',q').
\]
In the first comparison, participant \(n\)'s report stays at \(q\). In the
second, participant \(n-1\)'s report stays at \(q'\), while participant \(n\),
whose true type is still \(q\), deviates to \(q'\). The evaluator is \(q\) in
both inequalities, so they telescope. No coalitional incentive condition is
used. The rule \(G\) is also diagonally unanimous, and the induction hypothesis
gives it one fixed controlling coordinate.

Suppose first that an unmerged participant \(i\leq n-2\) controls \(G\). Fix
an arbitrary \(R\), let \(t=-p_i\), and set \(c=X(p_i)\). At the last-two
profile \((t,t)\),
\[
                         F(R,t,t)=G(R,t)=c.
\]
The point \(c\) is the unique worst outcome for type \(t\). Let participant
\(n-1\) have true type \(t\) and deviate to an arbitrary report \(a\), while
participant \(n\)'s report remains \(t\). IC gives
\[
                         t\cdot c\geq t\cdot F(R,a,t).
\]
Every outcome in \(K\) gives type \(t\) utility at least \(t\cdot c\), with
equality only at \(c\). Thus \(F(R,a,t)=c\). Now keep \(a\) fixed and let
participant \(n\), whose report \(t\) is still truthful, deviate to an
arbitrary \(b\). A second unilateral comparison gives
\[
                         t\cdot F(R,a,t)=t\cdot c
                         \geq t\cdot F(R,a,b),
\]
and uniqueness of the worst point again yields \(F(R,a,b)=c\). Since
\(R,a,b\) were arbitrary, the original participant \(i\) controls \(F\) at
every profile.

It remains to consider the case in which the merged coordinate controls
\(G\). For every fixed outside profile \(R\), define the binary section
\[
                         H_R(a,b)=F(R,a,b).
\]
It is IC in both displayed coordinates, and
\(H_R(q,q)=G(R,q)=X(q)\). The binary theorem therefore makes \(H_R\) either
the first-coordinate peak rule or the second-coordinate peak rule.

The controlling coordinate cannot change with \(R\). To prove this, take
outside profiles \(R,R'\) that differ only in one participant's report, from
\(r\) to \(r'\). Suppose \(H_R\) selects \(X(a)\) while \(H_{R'}\) selects
\(X(b)\). IC for that outside participant, evaluated at true type \(r\), says
\[
                         r\cdot X(a)\geq r\cdot X(b)
                         \qquad\text{for every }a,b\in S^{d-1}.
\]
Choosing \(a=-r\) and \(b=r\) makes the left side the minimum and the right
side the maximum of \(r\cdot y\) on \(K\). These values differ because a
full-dimensional body has positive width in every nonzero direction. If the
two sections have the opposite controllers, the IC
inequality is reversed, and choosing \(a=r\), \(b=-r\) gives the same
contradiction. Any two outside profiles can be connected by finitely many
single-coordinate changes. Hence every section has the same controller, and
one of the original labels \(n-1,n\) controls \(F\) at every profile. This
completes the induction.

Conversely, consider a fixed peak rule with controlling label \(j\). If
\(i=j\), then
\[
 p_j\cdot x_K(p_j)\geq p_j\cdot x_K(r_j)
\]
for every alternative report \(r_j\), because \(x_K(p_j)\) maximizes the true
linear valuation on \(K\). If \(i\ne j\), changing participant \(i\)'s report
leaves the outcome unchanged, so the IC inequality holds with equality. The
rule is diagonally unanimous as well. This proves both directions of the
theorem.
\end{proof}

\section{Conclusion}

On a strictly convex frontier, ordinary individual strategy-proofness and
agreement determine the rule completely: one fixed participant's preferred
support point is selected at every profile. This conclusion holds for every
finite population in dimension at least two, without continuity,
measurability, smoothness, anonymity, neutrality, or a Pareto assumption.

The proof derives its own regularity from incentives. It converts attainable
menus into Minkowski summands, uses the boundary to reduce them to two possible
forms, and then fixes the controlling label through ordinary unilateral
comparisons. Normal cones keep the conclusion valid at nonsmooth support
points, and the finite-population argument uses only individual deviations.
The body-valued comparison pinpoints the role of frontier feasibility:
allowing interior outcomes restores truthful shared influence through fixed
weighted averages.

\clearpage
\phantomsection

\clearpage
\appendix
\phantomsection
\addcontentsline{toc}{part}{Appendices}
\begin{center}
 {\Large\bfseries Appendices}\par\medskip
\end{center}
\section{Geometry and menus}\label{app:geometry}

For a nonempty compact convex set $K\subset\R^d$ write
$h_K(P)=\max_{y\in K}P\cdot y$ and
$E_K(P)=\{y\in K:P\cdot y=h_K(P)\}$ for $P\ne0$.
Support functions add under Minkowski sums and satisfy
$E_{C+D}(P)=E_C(P)+E_D(P)$; both facts follow by maximizing
the same linear functional separately in the two summands,
the second because two nonnegative support gaps sum to zero
only if each vanishes.
Equal support functions force equal sets: if $x\in A\setminus B$
with $B$ compact convex and nonempty, let $y$ be the point of
$B$ nearest $x$; convexity gives $(x-y)\cdot(z-y)\leq0$ for all
$z\in B$, so $u=x-y\ne0$ satisfies
$u\cdot x>\max_{z\in B}u\cdot z$, contradicting $h_A=h_B$.
Only this nearest-point argument is used, not a general
separation theorem.

For compact full-dimensional convex $K$, strict convexity is
equivalent to $E_K(P)$ being a singleton for every $P\ne0$.
Indeed, a maximizer of a nonzero linear functional cannot be an
interior point, so $E_K(P)$ is a convex subset of $\partial K$,
and two distinct points of it would span a segment in $\partial K$.
Conversely, if a nondegenerate segment $[a,b]$ lies in $\partial K$,
let $P\ne0$ support $K$ at its midpoint. Then
$P\cdot a,P\cdot b\leq h_K(P)$ and their average equals $h_K(P)$,
so both $a$ and $b$ belong to $E_K(P)$.

\begin{proof}[Proof of Lemma~\ref{lem:exposure}]
By this equivalence, $E_K(P)$ is a singleton, denoted $X(P)$.
If $P_j\to P\ne0$, compactness gives a convergent subsequence
of any proposed nonconvergent sequence $X(P_j)$; its limit
maximizes $P$ and so equals $X(P)$. Thus $X$ is continuous.
For small $v$,
\[
 v\cdot X(P)\leq h_K(P+v)-h_K(P)
              \leq v\cdot X(P+v).
\]
Continuity proves differentiability with $\nabla h_K(P)=X(P)$.
Every boundary point $y$ has a supporting nonzero normal:
take $z_k\notin K$ with $z_k\to y$, let $y_k$ be the point of
$K$ nearest $z_k$ and $r_k$ the unit vector along $z_k-y_k$.
Convexity gives $r_k\cdot(x-y_k)\leq0$ for all $x\in K$, and
$\|y_k-y\|\leq2\|z_k-y\|\to0$, so any subsequential limit $r$
of the $r_k$ supports $K$ at $y$. Strict convexity makes $y$
the unique maximizer of $r$. Therefore
$x_K=X|_{S^{d-1}}$ is onto $\partial K$, but need not be
injective: two arcs of $\partial K$ meeting at a corner give
that corner two distinct supporting unit normals. Note also
that $X(P)\in\partial K$ and $X(\lambda P)=X(P)$ for
$\lambda>0$, both used freely below.

For $\eta>0$, the set of pairs
$(p,y)\in S^{d-1}\times K$ with $\|y-x_K(p)\|\geq\eta$
is compact. When nonempty it contains no support maximizer,
so the minimum
\[
 \gamma_K(\eta)=\min\{h_K(p)-p\cdot y:
             p\in S^{d-1},\,y\in K,\,\|y-x_K(p)\|\geq\eta\}
\]
is strictly positive. For an empty feasible set we define it
as $+\infty$. Compactness and continuity also give
\[
 \zeta_K(\delta)=\max_{\substack{p,q\in S^{d-1}\\\|p-q\|\leq\delta}}
      [h_K(p)-p\cdot x_K(q)]\longrightarrow0.
\]
Let $\omega_K(\delta)$ denote the supremum of $\|y-x_K(p)\|$
over pairs with $h_K(p)-p\cdot y\leq\zeta_K(\delta)$; it is
bounded by the diameter of $K$ and nondecreasing in $\delta$.
It tends to zero, and this is the step on which the later
appendices rely, so we give it in full. Fix $\eta>0$.
Since $\gamma_K(\eta)>0$ and $\zeta_K(\delta)\to0$, choose
$\delta_0>0$ with $\zeta_K(\delta)<\gamma_K(\eta)$ for
$\delta<\delta_0$; when the constraint set defining
$\gamma_K(\eta)$ is empty this holds for every $\delta$ by
the convention $\gamma_K(\eta)=+\infty$. If a pair $(p,y)$
had $h_K(p)-p\cdot y\leq\zeta_K(\delta)$ and
$\|y-x_K(p)\|\geq\eta$, it would be feasible in the
minimization defining $\gamma_K(\eta)$ and would give
$\gamma_K(\eta)\leq\zeta_K(\delta)<\gamma_K(\eta)$.
Hence no such pair exists and $\omega_K(\delta)\leq\eta$
for $\delta<\delta_0$. As $\eta$ was arbitrary,
$\omega_K(\delta)\to0$.
IC and the option to report the opponent's type give
\[
 0\leq h_K(p)-p\cdot F(p,q)
 \leq h_K(p)-p\cdot x_K(q),
\]
hence $\|F(p,q)-x_K(p)\|\leq\omega_K(\|p-q\|)$.
No quadratic modulus or curvature bound is used.

If $L$ is a two-dimensional linear subspace, the projection
$K_L=\Pi_LK$ has interior in $L$, and for $u\in L\setminus\{0\}$,
$E_{K_L}(u)=\Pi_LE_K(u)$ is a singleton. By the equivalence
above, applied in the plane $L$, $K_L$ is strictly convex and
the same conclusions apply to it.
\end{proof}

\begin{proof}[Proof of Lemma~\ref{lem:ambient-menus}]
Fix a binary opponent $Q\ne0$ and let
$M_Q=\{F(P,Q):P\ne0\}$. IC says that $F(P,Q)$ maximizes
$P\cdot y$ over this actual menu.
Suppose $y_j\in M_Q$ converges to some $y$. Since
$M_Q\subset\partial K$ and $\partial K$ is closed, the limit
$y$ lies in $\partial K$; take any
supporting normal $r$ of $K$ at $y$. IC gives
$r\cdot F(r,Q)\geq r\cdot y_j\to h_K(r)$.
Uniqueness of $E_K(r)$ forces $F(r,Q)=y$. Thus $M_Q$
is closed and compact. Its convex hull $C_Q$ is compact
by the finite-dimensional convex-hull theorem.
For $y\in C_Q\cap\partial K$, represent $y$ as a finite
convex combination of menu points. A support normal
of $K$ at $y$ forces every positive-weight point of
that combination to equal $y$, so $y\in M_Q$. Hence
\[
 M_Q=C_Q\cap\partial K,\qquad
 C_Q=\operatorname{conv}(C_Q\cap\partial K),\qquad
 \operatorname{ext}C_Q\subset M_Q .
\]
For the last inclusion, let $e\in\operatorname{ext}C_Q$ and use
finite-dimensionality to write
$e=\sum_{j=1}^k\lambda_jm_j$, where $m_j\in M_Q$,
$\lambda_j\geq0$ and $\sum_j\lambda_j=1$. Choose an index
$j_0$ with $\lambda_{j_0}>0$. If $\lambda_{j_0}=1$, then
$e=m_{j_0}\in M_Q$. Otherwise put
\[
 y=\frac{1}{1-\lambda_{j_0}}
       \sum_{j\ne j_0}\lambda_jm_j\in C_Q.
\]
Then $e=\lambda_{j_0}m_{j_0}+(1-\lambda_{j_0})y$ is a strict
convex combination of two points of $C_Q$. Extremality forces
$m_{j_0}=y=e$, and hence again $e\in M_Q$.
Everything above uses only the first agent's IC; diagonal
unanimity enters solely to give $X(Q)\in M_Q$. The row
statements follow by applying the same argument to
$\widetilde F(P,Q)=F(Q,P)$, whose two IC conditions and
diagonal values are those of $F$ with the labels exchanged;
no symmetry of $F$ is required.
\end{proof}
\section{Planar incentives}\label{app:plane}

Let $L\subset\R^2$ be compact, full-dimensional and strictly convex.
Write
\[
 n(a)=(\cos a,\sin a),\quad t(a)=n'(a),\quad
 h(a)=h_L(n(a)),\quad z(a)=x_L(n(a)).
\]
Angles are lifted periodically to $\R$; measures such as
$\mu_L$ below are finite on the circle and, read on the lift,
periodic and locally finite rather than of finite total mass
on $\R$. Only the latter is used. The rule
$G:\Circ\times\Circ\to L$ is body-valued, is IC in both coordinates
for the corresponding linear utilities, and has $G(a,a)=z(a)$.
At the outset define the convexified column menu by
\[
 C_b=\overline{\operatorname{conv}}\{G(a,b):a\in\R\}.
\]
It is compact as a closed subset of $L$. We have not yet asserted
that the unconvexified image of a column is closed or continuous.
Its support function is
$c_b(a)=n(a)\cdot G(a,b)$, because truthful reporting maximizes
over the whole image and hence over its closed convex hull.

The proof has four steps. The incentive inequalities first give
local integrability without assuming measurability. They then yield
an additive distributional decomposition on the two nonparallel
angle strips. Legal offsets from the diagonal identify the component
measures and bound them by the full curvature measure. Finally,
monotone sandwiches recover the original point values, the antipodal
traces are matched, and the remaining curvature measure is integrated
to construct a Minkowski complement. Each step is carried out for
the original rule rather than only for an almost-everywhere
representative.

\subsection{Curvature and integrability}

The support inequality evaluated at the support point $z(a)$ gives
\[
 h(a+u)+h(a-u)\geq 2\cos u\,h(a).
\]
Pairing with any nonnegative smooth periodic test, dividing by
$u^2$ and taking the limit proves that
\[
 \mu_L:=D^2h+h\,da
\]
is a positive distribution, hence a finite positive Radon measure.
The representation of a positive distribution as a measure follows
from its order-zero bound on each compact set: if $\eta\geq1$
on that set, positivity bounds its value on a test $\phi$ by
$\|\phi\|_\infty\langle\mu_L,\eta\rangle$.
Strict convexity and Lemma~\ref{lem:exposure} imply $h\in C^1$ and
\[
 z=h n+h't,\qquad Dz=t\,\mu_L.
 \tag{B.1}\label{eq:plane-curvature}
\]
Indeed $Dh'=\mu_L-h\,da$, and the other product-rule terms cancel
using $t'=-n$. Thus $z$ is continuous and of bounded variation.
Its derivative has no atoms: a locally finite $Df=\nu$ has the
primitive $r\mapsto\nu((c,r])$ up to an additive constant, whose
jump at $r$ is $\nu(\{r\})$, and a continuous representative has
no jump. Since $Dz(\{a\})=t(a)\mu_L(\{a\})$ and $|t|=1$, this
makes $\mu_L$ atomless. Continuity is used here, not absolute
continuity.
This is curvature as a measure of the normal-angle variable.
An interval of normals supporting a corner has zero such
curvature; it does not create an atom. Singular-continuous
curvature remains possible.

For a one-coordinate increase $a\to a+u$, $0<u<\pi$, let
$\Delta$ be the change of the actual output. IC gives
\[
 n(a)\cdot\Delta\leq0,\qquad n(a+u)\cdot\Delta\geq0.
\]
In the midpoint frame $m=a+u/2$, these become
\[
 \beta:=t(m)\cdot\Delta\geq0,\qquad
 |n(m)\cdot\Delta|\leq\tan(u/2)\beta.
 \tag{B.2}\label{eq:plane-cone}
\]
At every nonparallel profile choose a linear functional nonzero
on both input tangents and orient each coordinate so its tangent
has a positive value; the two conditions of vanishing on
$t(a_0)$ and on $t(b_0)$ cut out two lines in the dual plane,
which do not cover it. A nearby independent functional has the
same signs. The constants can be chosen once for the whole
neighborhood rather than increment by increment. Let $c_0>0$ be
the least of the four central values $\ell_j(\sigma_it(\cdot))$
and $M=\max_j\|\ell_j\|$. By continuity pick a closed
rectangular neighborhood on which
$\ell_j(\sigma_it(m))\geq c_0/2$ for every midpoint $m$ arising
in it, and shrink its sides below some $\delta<\pi$ with
$M\tan(\delta/2)<c_0/4$. Then for orientation
$\sigma_i\in\{-1,1\}$ and every allowed increment in the box,
\[
 \ell_j(\sigma_i t(m))-\tan(u/2)|\ell_j(n(m))|
 \geq c_0/2-M\tan(\delta/2)>0
 \quad(j=1,2).
\]
Writing $\Delta=\alpha n(m)+\beta t(m)$ and using
\eqref{eq:plane-cone} this implies
$\ell_j(\sigma_i\Delta)\geq
\beta[\ell_j(\sigma_it(m))-\tan(u/2)|\ell_j(n(m))|]\geq0$.
One set of constants thus serves both functionals, both
coordinates and every opponent report in the box.
The scalarizations are bounded since $L$ is bounded.
On an $m$-by-$m$ rectangular grid, summing upper-corner minus
lower-corner oscillations telescopes separately in the two
coordinates; the upper-minus-lower Darboux sums are bounded by
$2\,\operatorname{area}(R)\operatorname{osc}(f)/m$.
Consequently each scalarization is Riemann integrable. It is
also Lebesgue measurable in the completed sense: refining the
grids produces Borel step functions $l_k\uparrow$, $u_k\downarrow$
with $l_k\leq f\leq u_k$ and vanishing integral gap, so their
limits satisfy $l\leq f\leq u$ with $l=u$ almost everywhere, and
$f$ differs from the Borel function $l$ only on a subset of a
Borel null set, which the completed field contains.
Their independence gives joint local integrability of $G$ on the
nonparallel set, without a measurability assumption on the
original rule. The parallel set
$Z=\{(a,b):a-b\in\pi\mathbb Z\}$ is a countable union of lines, hence
Lebesgue null in the plane, and its complement is covered by
countably many such boxes; since the rule takes values in the
bounded set $L$ everywhere, assigning it arbitrary values on
$Z$ preserves completed measurability. Thus
$G\in L^1_{\rm loc}(\R^2)$. This says nothing about the actual
values on $Z$, which are recovered only in
Section~\ref{sub:plane-actual}.

\subsection{Additive decomposition}

For every fixed opponent the scalar support envelope
$c_b(a)$ is Lipschitz, with constant at most
$\sup_{y\in L}\|y\|$. At every differentiability point the inequalities
\[
 c_b(a+u)-c_b(a)\geq
 [n(a+u)-n(a)]\cdot G(a,b)
\]
for positive and negative $u$ give
$c_b'(a)=t(a)\cdot G(a,b)$.
Integration by parts on product boxes and Fubini are now
legitimate by the preceding local integrability. They yield
\[
 n(a)\cdot\partial_aG=0,\qquad
 n(b)\cdot\partial_bG=0
\]
as distributions. Differentiate transversely and use
$\sin(a-b)\ne0$ to invert the smooth two-by-two coefficient
matrix. Thus $\partial_a\partial_bG=0$ on each strip
\[
 \Sigma_+=\{0<a-b<\pi\},\qquad
 \Sigma_-=\{-\pi<a-b<0\}.
\]
It follows that
\[
 G(a,b)=A_\pm(a)+B_\pm(b)
 \quad\hbox{jointly almost everywhere on }\Sigma_\pm,
 \tag{B.3}\label{eq:plane-additive}
\]
with locally integrable one-dimensional representatives.
Here is the required global decomposition argument. On a box,
a distribution $T$ with $\partial_bT=0$ depends only on $a$:
subtract from any test function its $b$-integral times a fixed
smooth $b$-test of mass one. The difference has a compactly
supported $b$-primitive and is annihilated by $T$.
Apply this to $\partial_aG$. The coordinate fibers of each
strip are intervals. A finite chain of overlapping boxes along
a fixed fiber identifies the same one-dimensional distribution
on a common neighborhood of its fixed coordinate. Compatible
local distributions therefore glue on $\R$. Take a
one-dimensional primitive $A$; the remainder has zero
$a$-derivative and similarly equals a distribution $B(b)$.
That these are functions, and not merely distributions,
needs an argument, since a distributional primitive is not
automatically locally integrable. Fix a box and a transverse
test $\psi$ of mass one and put $f(a)=\int G(a,b)\psi(b)\,db$.
Fubini and the local integrability just established give
$f\in L^1_{\rm loc}$, while $Df$ is the same one-dimensional
distribution as $DA$. A distribution on a line with zero
derivative is a constant, because a test of zero total
integral has a compactly supported primitive. Hence $A=f$
up to an additive constant and has a locally integrable
representative; averaging $G-A(a)$ in $a$ does the same for
$B$. Since a locally integrable function embeds injectively
into distributions, $G-A-B$ vanishes jointly almost
everywhere, and a countable cover gives
one joint null exceptional set for \eqref{eq:plane-additive}.
At this stage $A_\pm$ and $B_\pm$ are determined only up to
opposite additive constants, and neither is claimed to be
$2\pi$-periodic.

The weak orthogonality gives scalar distributions
$\alpha_\pm,\beta_\pm$ with
\[
 DA_\pm=t\,\alpha_\pm,\qquad DB_\pm=t\,\beta_\pm.
 \tag{B.4}\label{eq:plane-positive-derivatives}
\]
They are nonnegative. One direct verification retains the
null-set quantifiers. Fix $0<u<\pi$; the exceptional set is
allowed to depend on $u$, since the inequality below is
derived for each $u$ separately and only the limit of the
integrals is taken, so no union over uncountably many null
sets occurs. For almost every $a$,
Fubini permits a $b$ such that both $(a,b)$ and $(a+u,b)$
are in the strip and avoid the two additive exceptional sets.
The common transverse interval is $(a+u-\pi,a)$ in $\Sigma_+$
and $(a+u,a+\pi)$ in $\Sigma_-$, of length $\pi-u>0$.
Equation~\eqref{eq:plane-cone} then gives
$t(a+u/2)\cdot[A(a+u)-A(a)]\geq0$.
After multiplying by a nonnegative compactly supported test
$\phi(a)$, integrating and dividing by $u$, the limit is
\[
 \lim_{u\downarrow0}\int A(r)\cdot
 \frac{\phi(r-u)t(r-u/2)-\phi(r)t(r+u/2)}{u}\,dr
 =\langle t\cdot DA,\phi\rangle\geq0.
\]
The limit is justified by uniform convergence of the smooth
test quotient to $-(\phi t)'$ on a common compact set and local
integrability of $A$.
Varying the second coordinate instead proves $\beta_\pm\geq0$;
this requires its own common interval, namely $(b+u,b+\pi)$ in
$\Sigma_+$ and $(b+u-\pi,b)$ in $\Sigma_-$, again of length
$\pi-u$, after which the same test-quotient limit applies with
$b$ in place of $a$. Positive distributions are Radon measures
by the order-zero bound above, so $A_\pm,B_\pm$ are locally BV
with $|DA_\pm|=\alpha_\pm$ and $|DB_\pm|=\beta_\pm$.
This step does not discard atoms or singular-continuous parts.

\subsection{The diagonal budget}

IC and the option to report the other person's type imply
\[
 0\leq h(a)-n(a)\cdot G(a,b)
 \leq h(a)-n(a)\cdot z(b).
 \tag{B.5}\label{eq:plane-diagonal-gap}
\]
Uniform strict exposure in Lemma~\ref{lem:exposure} shows that
$G(r+\varepsilon,r)\to z(r)$ locally uniformly in $r$ as
$\varepsilon\to0$.

For $\Sigma_+$ use the affine change
$(r,\varepsilon)\mapsto(r+\varepsilon,r)$, whose Jacobian
has absolute value one. Pullback of the joint exceptional
set in \eqref{eq:plane-additive} is null. Fubini, with a
countable exhaustion by compact $r$-intervals, gives a
full-measure set of positive offsets for which
\[
 G(r+\varepsilon,r)=A_+(r+\varepsilon)+B_+(r)
 \quad\text{for almost every }r.
\]
Choose $\varepsilon_k\downarrow0$ in that set.
The left side tends to $z$ locally uniformly, and
$A_+(\cdot+\varepsilon_k)\to A_+$ in $L^1_{\rm loc}$
by translation continuity of $L^1$ functions, which follows
by approximating in $L^1$ by a compactly supported smooth
function, for which the statement is uniform continuity.
On a compact interval $I$ this gives
\[
 \|A_++B_+-z\|_{L^1(I)}
 \leq\|A_+-A_+(\cdot+\varepsilon_k)\|_{L^1(I)}
    +|I|\sup_{r\in I}\|G(r+\varepsilon_k,r)-z(r)\|,
\]
whose right side tends to zero while the left side does not
depend on $k$. Therefore
\[
 A_++B_+=z\quad\text{as an }L^1_{\rm loc}\text{ identity}.
\]
Negative legal offsets give the identical statement for
$\Sigma_-$. That case must repeat the construction with the
exceptional set of its own strip: nothing about $\Sigma_-$
follows from the positive-offset conclusion, and no pairing
between the two full-measure offset sets is needed, nor is
$A_+=A_-$ or $B_+=B_-$ asserted. No joint a.e. equality has
been restricted directly to the null diagonal.

Only now differentiate this identity. From
\eqref{eq:plane-curvature} and \eqref{eq:plane-positive-derivatives},
\[
 \alpha_\pm+\beta_\pm=\mu_L,\qquad
 0\leq\alpha_\pm,\beta_\pm\leq\mu_L.
 \tag{B.6}\label{eq:plane-measure-budget}
\]
All four measures are atomless, being dominated by the
atomless $\mu_L$; the domination is as measures on Borel
sets, not merely between densities. Their BV primitives have
continuous representatives: for a vector measure $m=t\alpha_\pm$
the function $V(r)=m((r_0,r])$, extended by
$V(r)=-m((r,r_0])$ for $r<r_0$, satisfies $DV=m$, is locally
BV, and is continuous precisely because $m$ has no atoms, so
$A_\pm$ equals $V$ up to an additive constant.
Because the continuous sums agree with the continuous $z$
almost everywhere, they agree everywhere.
Singular-continuous parts are allowed and are dominated by
the ambient curvature measure; atomlessness is not absolute
continuity, and no absolute-continuity
conclusion is made for general $L$.

\subsection{Pointwise recovery}
\label{sub:plane-actual}

On a small oriented box the continuous function
$H=A_\pm(a)+B_\pm(b)$ agrees with $G$ almost everywhere.
Write $(u,v)$ for the oriented coordinates and $(u_0,v_0)$
for the target. For small $\delta>0$ the rectangles
\[
 (u_0-\delta,u_0-\tfrac\delta2)\times(v_0-\delta,v_0-\tfrac\delta2),
 \quad
 (u_0+\tfrac\delta2,u_0+\delta)\times(v_0+\tfrac\delta2,v_0+\delta)
\]
lie in the box and have area $\delta^2/4>0$, so each contains
a point $x^\mp_\delta$ off the exceptional set; one such point
serves both functionals at once, since the whole vector
identity $G=H$ holds there. Monotonicity of the two
independent scalarizations in each oriented coordinate gives
$\ell_j\cdot H(x^-_\delta)\leq\ell_j\cdot G(u_0,v_0)
\leq\ell_j\cdot H(x^+_\delta)$ for $j=1,2$.
Letting $\delta\downarrow0$ and using continuity of $H$ and
invertibility of $y\mapsto(\ell_1\cdot y,\ell_2\cdot y)$
identifies the original rule at every nonparallel profile.
Positive area is what excludes
exceptional whole columns, not just isolated points.

At the diagonal, the uniform gap bound
\eqref{eq:plane-diagonal-gap} identifies the trace with
the prescribed value $z$. At the antipodal seam, for each
fixed $b$ the two traces are
\[
 L_+(b)=A_+(b+\pi)+B_+(b),\qquad
 L_-(b)=A_-(b-\pi)+B_-(b).
\]
Their difference $J=L_+-L_-$ is continuous and locally BV.
Closedness of $C_b$, continuity of the traces, and periodicity of its
support function give the explicit seam relations
\[
 L_\pm(b)\in C_b,qquad
 -n(b)\cdot L_\pm(b)=c_b(b+\pi)=c_b(b-\pi).
 \tag{B.7}\label{eq:plane-seam-support}
\]
Thus both traces maximize the same supporting direction $-n(b)$ of
the fixed compact convex menu, and hence
$n(b)\cdot J(b)=0$ at every $b$.
For a translated measure write
$\alpha^{[s]}(E)=\alpha(E+s)$. Then
\[
 DJ=t\,\sigma,\qquad
 \sigma=\beta_+-\beta_--\alpha_+^{[\pi]}
                         +\alpha_-^{[-\pi]}.
 \tag{B.8}\label{eq:plane-seam-measure}
\]
The signs follow from $t(b+\pi)=t(b-\pi)=-t(b)$.
Distributional product differentiation gives
\[
 0=D(n\cdot J)=(t\cdot J)\,db+n\cdot(t\,\sigma)
               =(t\cdot J)\,db .
\]
Continuity implies $t\cdot J=0$ everywhere, so $J=0$.
This calculation retains any singular-continuous part of
$\sigma$; it vanishes in the normal component because
$n\cdot t$ vanishes identically.
The periodicity used here is that of the rule. The
decomposition functions themselves may have opposite
additive increments over one period and are not assumed
separately periodic.

Define $g_b$ using the actual values off the seams and the
common traces at the seams. It is a continuous support
selection for $C_b$ on the full circle. Membership and the
support equality at a seam follow by taking limits in the
closed set $C_b$. The support inequalities give
\[
 [n(a+u)-n(a)]\cdot g_b(a)
 \leq c_b(a+u)-c_b(a)
 \leq[n(a+u)-n(a)]\cdot g_b(a+u).
\]
Taking the two-sided difference quotient shows
$c_b'(a)=t(a)\cdot g_b(a)$ everywhere.
If $y\in E_{C_b}(n(a))$, the same support inequality with
the fixed maximizer $y$ gives
$n(a)\cdot y=c_b(a)$ and $t(a)\cdot y=c_b'(a)$.
These two planar coordinates force $y=g_b(a)$.
Thus the exposed face at every direction is a singleton,
and the original IC selection at the seam, which lies in
$C_b$ and maximizes the seam direction, must equal
$g_b(a)$ as well. Each original column is now continuous
and has compact image; a compact set in the plane has
compact convex hull, being the continuous image of
$M_b^3\times\Delta_2$, so its initial closed convex hull
equals its ordinary convex hull. For the row conclusions
apply everything above to $\widetilde G(a,b)=G(b,a)$:
its first-coordinate IC is the second-coordinate IC of $G$,
its diagonal is unchanged, and its values lie in the same
$L$, so the hypotheses hold verbatim. This exchanges the
roles of the two strips and of the additive functions;
it does not assert anonymity or that $\widetilde G=G$.

\subsection{Minkowski complements}

On the two open arcs of a fixed column,
$Dg_b=t\,\nu_b$ with $0\leq\nu_b\leq\mu_L$ by
\eqref{eq:plane-measure-budget}. Each arc has finite
variation up to its endpoints, and gluing two such arcs at
two junction points produces a distributional derivative
equal to the two open-arc derivatives together with atoms
carrying the right-minus-left traces at the junctions. Those
traces were just shown to agree, so the atoms vanish, and
$g_b$ is of bounded variation on the full circle; continuity
alone would not give this. Hence
\[
 c_b''+c_b\,da=\nu_b,\qquad
 (h-c_b)''+(h-c_b)\,da=\mu_L-\nu_b\geq0.
 \tag{B.9}\label{eq:plane-complement}
\]
For completeness the planar support criterion is verified
directly. Put $d_b=h-c_b$, a periodic $C^1$ function, and
$q_b=d_b n+d_b't$. Its distributional derivative is
$t(\mu_L-\nu_b)$ and it is continuous. Lift two circle directions to $\alpha,\beta$ with
$|\beta-\alpha|\leq\pi$. If $0\leq\beta-\alpha\leq\pi$, then
\[
 n(\beta)\cdot[q_b(\beta)-q_b(\alpha)]
 =\int_{(\alpha,\beta]}\sin(\beta-r)\,(\mu_L-\nu_b)(dr)\geq0,
\]
since $0\leq\beta-r\leq\pi$ on the domain of integration.
For the other orientation, $0\leq\alpha-\beta\leq\pi$, the
integral runs over $(\beta,\alpha]$ and carries a minus sign,
while $\sin(\beta-r)\leq0$ there because
$-\pi\leq\beta-r\leq0$; the two sign changes cancel and the
same conclusion holds. Both signs must be tracked: the
forward formula cannot simply be reused. Coincident
directions give equality, and at an arc of length exactly
$\pi$ the endpoint sine vanishes. Hence for all circle
directions
$n(\beta)\cdot q_b(\alpha)\leq d_b(\beta)$, with equality
at $\alpha=\beta$. The compact convex hull $D_b$ of
$q_b(\Circ)$ has support function $d_b$. Additivity and
uniqueness of support functions give
\[
 L=C_b+D_b
\]
for every actual column $b$, not merely almost every one.

In the unit-disk case $\mu_L=da$, so
$Dg_b=t\,\lambda_b(a)\,da$, $0\leq\lambda_b\leq1$.
Every angular section is absolutely continuous and
one-Lipschitz. These estimates hold for every fixed
opponent and both input positions, and imply joint
continuity by changing the coordinates in sequence,
with the bound $d_{\Circ}(a,a')+d_{\Circ}(b,b')$ in the
shorter arc distance.
For general $L$ put
$\omega_\mu(\delta)=\sup\{\mu_L(I):I\text{ an arc of length}
\leq\delta\}$. Then $\omega_\mu(\delta)\to0$: otherwise there
are $\delta_k\downarrow0$ and arcs $I_k$ with
$\mu_L(I_k)\geq\eta>0$, and a convergent subsequence of their
midpoints gives a point $a_*$ such that every closed arc
centered at $a_*$ carries mass at least $\eta$, whence
$\mu_L(\{a_*\})\geq\eta$ by continuity from above,
contradicting atomlessness. Since every actual section
satisfies $|Dg|=\nu\leq\mu_L$, all of them share this single
modulus, and
\[
 \|G(a,b)-G(a',b')\|\leq
 \omega_\mu(d_{\Circ}(a,a'))+\omega_\mu(d_{\Circ}(b,b')),
\]
which is the joint continuity asserted, including near the
diagonal and the seam. Pointwise atomlessness alone would not
give this uniformity. The modulus need not be linear in
$\delta$: singular-continuous variation
need not disappear, and no Lipschitz or absolute-continuity
upgrade is obtained. This proves
Lemma~\ref{lem:planar-body}.
\section{Joint envelopes}\label{app:transport}

The analytic argument treats a binary IC diagonal-unanimous
vector rule with values in $K$. Boundary-valuedness will only be
used after the global menu decomposition has been established.
Extend it homogeneously in each positive radial coordinate to
$\Omega=(\R^d\setminus\{0\})^2$ and retain the notation $F$.
Define $C_Q=\overline{\operatorname{conv}}\{F(P,Q):P\ne0\}$
and the analogous closed convex row menu $R_P$. They are compact
subsets of $K$ even when the original image is not closed. Put
\[
 U(P,Q)=P\cdot F(P,Q)=h_{C_Q}(P),\qquad
 V(P,Q)=Q\cdot F(P,Q)=h_{R_P}(Q).
\]
All equalities are pointwise consequences of IC.

\subsection{Faces and measurability}

We first establish the cross-coordinate constancy of projected
exposed faces. This is the geometric input that allows scalar support
envelopes to be continuous even before the vector rule is known to be
measurable.

For any two-plane $L$ containing $Q$, restriction of both reports
to $L$ and orthogonal projection of outcomes into $K_L$ preserves
both utility inequalities and the diagonal support point.
Lemma~\ref{lem:planar-body} applies to this body-valued rule.
Its restricted column's closed convex hull equals $\Pi_LC_Q$:
for each direction $u\in L$, the truthful report $u$ already
maximizes the entire ambient menu, so the two sets have the
same support function on $L$. Consequently
\[
 K_L=\Pi_LC_Q+D_{Q,L}.
 \tag{C.1}\label{eq:plane-projected-summand}
\]
Since $K_L$ is strictly convex, each exposed face of
$\Pi_LC_Q$ is a singleton: the exposed face of a Minkowski
sum is the sum of the exposed faces, and adding a point of
the nonempty $E_{D_{Q,L}}(u)$ to two distinct points of
$E_{\Pi_LC_Q}(u)$ would give two distinct points of
$E_{K_L}(u)$.
If $P,Q$ are independent, take $L=\operatorname{span}(P,Q)$.
Two points of $E_{C_Q}(P)$ have the same $P$-inner product
and the same support value, so their projections lie in
$E_{\Pi_LC_Q}(P)$ and coincide; their difference lies in
$L^\perp$, and since $Q\in L$ their $Q$-inner products agree.
It is this cross-coordinate constancy that is used below;
constancy of the $P$-inner product on $E_{C_Q}(P)$ is the
definition of an exposed face and carries no information.
If $P=\pm\lambda Q$ for $\lambda>0$, use every two-plane
containing that axis. The difference of two points of the face
lies in all their orthogonal complements; a vector orthogonal
to every such plane is orthogonal to all of $\R^d$, since any
$v$ lies in $\operatorname{span}(Q,v)$, one of these planes.
The intersection of the complements is therefore zero and the
ambient exposed face is itself a
singleton. For $d=2$ the only such plane is the whole space.
Rows satisfy the interchanged assertion, obtained by applying
the column statement to $\widetilde F(P,Q)=F(Q,P)$.

We next pass from this face constancy to envelope continuity and
measurability. The order is essential: the joint Borel set of
differentiability points is constructed from the already continuous
envelope, and only then is Tonelli used.

Fix $P$ and let $Q_j\to Q$. The outcomes $F(P,Q_j)$ belong
to the same compact convexified row menu. Any cluster point is in its
$Q$-exposed face by the row support inequalities and continuity
of a support function. The preceding face property makes
the $P$-inner product constant on that face. Thus
$U(P,Q_j)\to U(P,Q)$. In its first variable $U$ has a common
Lipschitz constant $R=\max_{y\in K}\|y\|$. The estimate
\[
 |U(P_j,Q_j)-U(P,Q)|
 \leq R\|P_j-P\|+|U(P,Q_j)-U(P,Q)|
\]
proves joint continuity. The argument for $V$ is the mirror
image and is worth stating, because it uses the column face
rather than the row face: fix $Q$, let $P_j\to P$, and note
that the outcomes lie in the compact $C_Q$, so any cluster
point belongs to $E_{C_Q}(P)$ by first-coordinate support
maximization; the face property just proved keeps the
$Q$-inner product constant there, giving
$V(P_j,Q)\to V(P,Q)$, and the same Lipschitz constant $R$ in
the second variable completes the estimate.
Only one inner product of the outcome is controlled at a
time; cluster points are never identified with one another,
so continuity of the
original vector selection has not been proved.

For a continuous function convex in $P$, its differentiability
set in $P$ is jointly Borel. On local boxes this can be
expressed through the limits of countably many rational
one-sided coordinate difference quotients, each of which is
jointly continuous by the joint continuity of $U$, so their
one-sided limits $d_i^\pm$ are Borel and the set where
$d_i^+=d_i^-$ for every $i$ is Borel. This coordinatewise
condition does characterize differentiability here, although
it would not for a general function. Any $y\in E_{C_Q}(P)$
satisfies $d_i^-\leq y_i\leq d_i^+$ by the support
inequalities; equality of the one-sided limits therefore
makes the face a single point $y_*$, and for $y_v\in
E_{C_Q}(P+v)$ compactness and continuity of support values
give $y_v\to y_*$, whence
\[
 0\leq h_{C_Q}(P+v)-h_{C_Q}(P)-v\cdot y_*
 \leq\|v\|\,\|y_v-y_*\|=o(\|v\|),
\]
which is full differentiability with gradient $y_*$.
Each fixed-$Q$ exception
is Lebesgue null because $h_{C_Q}$ is $R$-Lipschitz on $\R^d$.
Since the joint exception is already known to be Borel,
Tonelli applied to its indicator makes it jointly null;
the order matters, as Fubini could not be applied to a set
not yet known to be measurable.
At differentiability points support maximization gives
$F(P,Q)=\nabla_PU(P,Q)$. The latter is a jointly Borel
function where defined. Arbitrary values of $F$ on the
joint null exception make the original rule measurable
in the completed Lebesgue sigma-field and locally bounded.
Interchanging the two coordinates gives
$F(P,Q)=\nabla_QV(P,Q)$ almost everywhere as well. More explicitly,
for fixed $P$ the second-coordinate IC condition makes $V(P,\cdot)$
the support function of the row menu, so its gradient at every
differentiability point is the unique selected support maximizer;
the same Borel-set and Tonelli argument supplies one joint null set.
This is the precise measurability used below; no Borel
selection hypothesis has been added.

\subsection{Matrix measures}

Define on $\Omega$, in fixed Euclidean coordinates,
\[
 A_{ij}=\partial_{P_j}F_i,\qquad B_{ij}=\partial_{Q_j}F_i.
\]
The preceding a.e. identities and Fubini give
\[
 A=D^2_{PP}U=A^\top\succeq0,\qquad
 B=D^2_{QQ}V=B^\top\succeq0
 \tag{C.2}\label{eq:positive-matrices}
\]
as distributions. For each constant vector $v$, the scalar
distribution $v^\top Av$ is positive, hence a Radon measure.
Polarization gives every entry as a signed Radon measure.
Local finiteness of the positive trace follows directly:
for a compact set choose a nonnegative smooth cutoff $\chi$
equal to one on it, and use
\[
 \langle\operatorname{tr}A,\chi\rangle
       =-\int_\Omega F\cdot\nabla_P\chi.
\]
Boundedness of $F$ bounds the right side. Entries are controlled
by the trace: for a relatively compact Borel $E$ the PSD matrix
$A(E)$ has $|A_{ij}(E)|\leq\sqrt{A_{ii}(E)A_{jj}(E)}
\leq\frac12[A_{ii}(E)+A_{jj}(E)]$ by its two-by-two minors, and
applying this to each piece of a finite Borel partition and
taking the supremum over partitions bounds the \emph{total
variation}
\[
 |A_{ij}|(E)\leq\tfrac12[A_{ii}(E)+A_{jj}(E)]
 \leq(\operatorname{tr}A)(E).
\]
The distinction matters: vanishing net mass $A_{ij}(E)=0$ would
not by itself exclude cancellation, whereas vanishing total
variation does.
The same facts hold for $B$. Hessians may have singular parts.

Let $T_{ijk}=\partial_{Q_k}\partial_{P_j}F_i$.
Symmetry of $A$ interchanges $i,j$; symmetry of $B$ and
commutation of derivatives interchange $i,k$.
Thus all three indices are symmetric as distributions.
Positive homogeneity of degree zero in $P$ gives
$\sum_\ell P_\ell\partial_{P_\ell}F_i=0$ in the weak sense,
which must be derived rather than read off a classical chain
rule, $F$ not being known to be differentiable. For a test
$\phi$ the integral $I(s)=\int F_i(sP,Q)\phi\,dP\,dQ$ is
constant in $s$; substituting $R=sP$ gives
$I(s)=\int F_i(R,Q)s^{-d}\phi(R/s,Q)\,dR\,dQ$, which may be
differentiated at $s=1$ because the test functions involved
have a common compact support and $F_i$ is locally bounded.
This yields
$0=-\int F_i[d\phi+P\cdot\nabla_P\phi]
=\langle\sum_\ell P_\ell\partial_{P_\ell}F_i,\phi\rangle$,
the factor $d$ coming from the radial Jacobian.
Differentiate in $Q_j$, noting that $P_\ell$ does not depend
on $Q$ so no coefficient term appears, and use the mixed
symmetry to obtain
\[
 \sum_\ell P_\ell T_{i\ell j}
 =\sum_\ell P_\ell T_{ij\ell}=0,\qquad
 (P\cdot\nabla_Q)A_{ij}=0.
 \tag{C.3}\label{eq:transport-equation}
\]
Only smooth coordinate multipliers and distributional
derivatives occur; no product of singular distributions
is taken.

\subsection{Transport}

Set $\mathcal O=\{(P,Q)\in\Omega:P,Q\text{ independent}\}$.
The flow $S_t(P,Q)=(P,Q+tP)$ stays in this set for every
real $t$, since $Q+tP$ can be neither zero nor a multiple of
$P$ when $P,Q$ are independent; it is a diffeomorphism with
inverse $S_{-t}$ and unit Jacobian. Its generating field
$Y=(0,P)$ has zero divergence, as $\operatorname{div}_QP=0$.
Writing $\phi_t(P,Q)=\phi(P,Q-tP)$, the pullback of a scalar
entry $M$ pairs as $\langle S_t^*M,\phi\rangle
=\langle M,\phi_t\rangle$, and over a bounded interval of $t$
these transported test functions have supports whose union is
a compact subset of $\mathcal O$, so the parameter derivative
may be taken in the test-function topology:
\[
 \frac{d}{dt}\langle S_t^*M,\phi\rangle
 =\langle M,-P\cdot\nabla_Q\phi_t\rangle
 =\langle (P\cdot\nabla_Q)M,\phi_t\rangle=0
\]
by \eqref{eq:transport-equation}, the middle equality being
exactly where the vanishing divergence is used. Hence
$S_t^*A=A$; no derivative of a singular measure along the
flow is taken pointwise.
Degree-zero homogeneity in $Q$, obtained by differentiating
the original radial identity for $F$ in $P$, also gives
$R_\varepsilon^*A=A$ for
$R_\varepsilon(P,Q)=(P,\varepsilon Q)$, $\varepsilon>0$.
Indeed, $F\circ R_\varepsilon=F$ pointwise by radial homogeneity.
Taking the weak derivative with respect to $P$ on both sides gives
$D_P(F\circ R_\varepsilon)=D_PF$; since the first component of
$R_\varepsilon$ is $P$, the left side is exactly the componentwise
distributional pullback $R_\varepsilon^*A$, including the inverse
Jacobian in the $Q$ variables.
The composition
\[
 \Phi_\varepsilon=S_1\circ R_\varepsilon,\qquad
 \Phi_\varepsilon(P,Q)=(P,P+\varepsilon Q)
\]
therefore satisfies $\Phi_\varepsilon^*A=A$ on $\mathcal O$.

The star here is the scalar distribution pullback of each
fixed-coordinate matrix entry, not the pullback of a
covariant two-tensor. Its exact convention is
\[
 \langle\Phi_\varepsilon^*M,\varphi\rangle
 =\left\langle M,\varepsilon^{-d}
       \varphi\!\left(P,\frac{Q-P}{\varepsilon}\right)\right\rangle .
 \tag{C.4}\label{eq:pullback-convention}
\]
The inverse Jacobian is $\varepsilon^{-d}$.
For function densities this is composition; for positive
matrix measures it preserves positivity against nonnegative
tests. The shear has Jacobian one, and radial invariance
includes this inverse-Jacobian factor rather than dropping it.
This proves Lemma~\ref{lem:transport}.
\section{Global menu decomposition}\label{app:budget}

\subsection{The diagonal budget}

Put $F_\varepsilon(P,Q)=F(P,P+\varepsilon Q)$ on $\mathcal O$;
this stays in $\mathcal O$ for $\varepsilon>0$.
On any compact subset of this open set, $\|P\|$ is bounded below
by some $m>0$ and $\|Q\|$ above by some $M$. For nonzero
$x,y$ one has
$\|x/\|x\|-y/\|y\|\,\|\leq2\|x-y\|/\|y\|$, so
\[
 \sup\bigl\|\widehat{P+\varepsilon Q}-\widehat P\bigr\|
 \leq2\varepsilon M/m\longrightarrow0 .
\]
Since $F$ is radially homogeneous,
$F_\varepsilon(P,Q)=F(\widehat P,\widehat{P+\varepsilon Q})$,
and the uniform exposure modulus of
Lemma~\ref{lem:exposure} applies pointwise to give
\[
 \|F_\varepsilon(P,Q)-X(P)\|\leq\omega_K(2\varepsilon M/m),
 \qquad
 F_\varepsilon(P,Q)\longrightarrow X(P)
\]
locally uniformly as $\varepsilon\downarrow0$. This is a
pointwise incentive-loss bound on the actual rule, not a
continuity or almost-everywhere statement, and the limiting
profile $(P,P)$ need not lie in $\mathcal O$.
The linear weak chain rule and the componentwise pullback
convention \eqref{eq:pullback-convention} give
\[
 D_PF_\varepsilon
   =\Phi_\varepsilon^*A+\Phi_\varepsilon^*B
   =A+\Phi_\varepsilon^*B.
 \tag{D.1}\label{eq:diagonal-chain}
\]
The second coefficient is one, not $\varepsilon$, because
the derivative of the second input $P+\varepsilon Q$ with
respect to $P$ is the identity matrix; $\varepsilon$ appears
only on differentiating in the new variable $Q$.
Since the chain rule is applied to an $L^1_{\rm loc}$ function,
it is verified on tests: with
$\psi(R,S)=\varepsilon^{-d}\phi(R,(S-R)/\varepsilon)$ one has
\[
 \partial_{R_j}\psi+\partial_{S_j}\psi
 =\varepsilon^{-d}(\partial_{P_j}\phi)\circ\Phi_\varepsilon^{-1},
\]
the two terms in $\varepsilon^{-1}\partial_{Q_j}\phi$
cancelling, and pairing $-F_i$ with each side gives
\eqref{eq:diagonal-chain}. Positive pullback
preserves the PSD property of $B$, as
$\langle v^\top\Phi_\varepsilon^*Bv,\varphi\rangle
=\langle v^\top Bv,\varepsilon^{-d}\varphi\circ
\Phi_\varepsilon^{-1}\rangle\geq0$ for nonnegative $\varphi$,
with no density needed.
Local uniform convergence implies convergence of the weak
derivatives against every compactly supported test, since
$|\langle\partial_{P_j}(F_\varepsilon)_i-\partial_{P_j}X_i,
\varphi\rangle|
\leq\sup_{\operatorname{supp}\varphi}\|F_\varepsilon-X\|
\int|\partial_{P_j}\varphi|$.
The limit is the product measure it is written as: for any
test, Fubini in $Q$ gives
$\langle\partial_{P_j}X_i,\varphi\rangle
=\langle(D^2h_K)_{ij},\int\varphi\,dQ\rangle$, the inner
integral being a smooth compactly supported $P$-test away
from the origin. Convergence of distributions, not of
classical derivatives, is what is claimed.
Consequently, for every constant $v\in\R^d$ and nonnegative
test $\varphi$ on $\mathcal O$,
\[
 \left\langle v^\top(\mathsf K-A)v,\varphi\right\rangle
 =\lim_{\varepsilon\downarrow0}
     \left\langle v^\top\Phi_\varepsilon^*Bv,\varphi\right\rangle
 \geq0,
 \qquad
 \mathsf K=(D^2h_K)\otimes\mathcal L_Q^d .
\]
Thus the full matrix-measure inequality is
\[
 0\preceq A\preceq (D^2h_K)\otimes\mathcal L_Q^d
 \quad\text{on }\mathcal O.
 \tag{D.2}\label{eq:loewner-budget}
\]
The Hessian of the finite convex support function is a
PSD matrix Radon measure on the punctured $P$-space.
Equation~\eqref{eq:loewner-budget} retains its absolutely
continuous and singular parts. It is not a density-only
a.e. inequality.

\subsection{Parallel reports}

For $d\geq3$ let
$Z=\Omega\setminus\mathcal O=\{(P,Q):P\parallel Q\}$.
Fix a compact set $L_0\Subset\Omega$ and a nonnegative
smooth compactly supported $\psi$ equal to one near it.
On this support define
\[
 a=\frac{P\cdot Q}{\|P\|^2},\qquad W=Q-aP,\qquad r^2=\|W\|^2.
\]
Direct differentiation gives $\nabla_Pr^2=-2aW$.
Choose a nonnegative smooth $\theta$ equal to one on $[0,1]$
and zero on $[4,\infty)$, and set
$\chi_\delta=\psi\theta(r^2/\delta^2)$.
This is a smooth test, including at $W=0$, and equals
one on $Z\cap L_0$. The support of its cutoff factor
has volume $O(\delta^{d-1})$: for every fixed $P$,
decompose bounded $Q$ into its component on the line
$\R P$ and its $d-1$ orthogonal components.
The cutoff derivative in $P$ is $O(\delta^{-1})$,
since $a$ is bounded and $|W|\leq2\delta$ where it is
nonzero. All constants depend only on the fixed compact
neighborhood and $K$.

Because $\operatorname{tr}A$ is a positive measure and
$F$ is bounded, integration by parts gives
\[
 \begin{split}
 0\leq(\operatorname{tr}A)(Z\cap L_0)
 &\leq\langle\operatorname{tr}A,\chi_\delta\rangle\\
 &=-\int F\cdot\nabla_P\chi_\delta
 \leq C_1\delta^{d-2}+C_2\delta^{d-1}.
 \end{split}
 \tag{D.3}\label{eq:parallel-cutoff}
\]
The two orders of magnitude come from the $d-1$ orthogonal
$Q$-directions and one differentiation of the cutoff, the
first term of $\nabla_P\chi_\delta$ contributing
$O(\delta^{d-1})$ and the term carrying $\theta'$, supported
where $\delta\leq\|W\|\leq2\delta$, contributing
$O(\delta^{d-2})$.
The last estimate bounds the absolute value of the
integral. It tends to zero for $d\geq3$, and a countable
compact exhaustion gives $(\operatorname{tr}A)(Z)=0$. The
total-variation bound $|A_{ij}|\leq\operatorname{tr}A$
established above then makes every entry vanish on $Z$ in
total variation, not merely in net mass.
In dimension two the leading term is $O(1)$ and the estimate
gives nothing; nor may it be replaced by the observation that
$Z$ is Lebesgue null, since $A$ may carry singular mass.
No conclusion from \eqref{eq:parallel-cutoff}
is used in dimension two.

Let $\lambda=\operatorname{tr}(D^2h_K)$, which may
itself be singular in $P$. Tonelli gives, locally,
\[
 (\lambda\otimes\mathcal L_Q^d)(Z)
 =\int \mathcal L^d\{Q:Q\parallel P\}\,\lambda(dP)=0,
\]
because the inner set is a line and $d\geq2$; the null set
is produced in the $Q$-variable, so no absolute continuity of
$\lambda$ is needed. $Z$ is closed in $\Omega$, being cut out
by the equations $P_iQ_j-P_jQ_i=0$, so the indicator is Borel
and Tonelli applies.
Therefore $\mathsf K(Z)=0$ in total variation as well, and
since both matrix measures give no mass to $Z$, integrating
$v^\top(\mathsf K-A)v$ against a nonnegative test may ignore
$Z$ and reduces to the inequality already proved on
$\mathcal O$. Hence
\eqref{eq:loewner-budget} extends to all of $\Omega$
when $d\geq3$. Again this rests on two separate removals of
mass, not on the Lebesgue nullity of $Z$.

\subsection{Minkowski complements}

Write $w(P,Q)=h_K(P)-U(P,Q)$. It is jointly continuous,
positively homogeneous of degree one in $P$, and
$|w(P,Q)|\leq2R\|P\|$ with $R=\max_{y\in K}\|y\|$.
For $d\geq3$ the preceding budget says
$D^2_{PP}w=\mathsf K-A\succeq0$ on $\Omega$.
For any fixed $v$ and nonnegative
$\varphi\in C_c^\infty(\R^d\setminus\{0\})$, the scalar
function
\[
 g(Q)=\int w(P,Q)\,\partial_v^2\varphi(P)\,dP
\]
is continuous. Joint positivity makes
$\int g(Q)\eta(Q)\,dQ\geq0$ for every nonnegative
$Q$-test $\eta$. If $g$ were negative at any actual
$Q$, continuity would give a negative neighborhood
and a test contradicting this inequality.
Thus $g(Q)\geq0$ for every $Q$. This reasoning applies
to each test and direction; no common exceptional
set over uncountably many tests is required.

For every fixed $Q$, $w(\cdot,Q)$ consequently has a
PSD distributional Hessian on the punctured space.
Local mollification on a ball avoiding zero gives
smooth functions with PSD Hessians; local uniform
convergence gives local convexity of $w$.
On a line segment avoiding zero, local convexity
on its compact parameter interval gives the
convexity inequality along the entire segment.
The punctured space is not convex, so this needs the
usual chaining: cover the compact parameter interval by
finitely many intervals of local convexity, let $\ell>0$ be
a Lebesgue number of the cover, and for every sufficiently large
$m$ with $2/m<\ell$,
so that each triple of consecutive grid points
$(i-1)/m,i/m,(i+1)/m$ lies in one such interval. The
resulting grid increments are nondecreasing, whence
$F_0(i/m)\leq(1-i/m)F_0(0)+(i/m)F_0(1)$ for
$F_0(t)=w((1-t)x+ty,Q)$. Given any $t\in[0,1]$, choose grid
indices $i_m$ with $i_m/m\to t$ along such sufficiently large
$m$. Letting $m\to\infty$ and using continuity of $F_0$ gives
$F_0(t)\leq(1-t)F_0(0)+tF_0(1)$ at every $t$.
Set $w(0,Q)=0$. The linear bound above gives
continuity at zero. A segment through zero can be
translated in a direction transverse to its line
because $d\geq2$; every point of the translated segment then
has a nonzero transverse component and avoids the origin.
Apply convexity on the translated
segments and let the translation tend to zero, which also
covers a segment with one endpoint at the origin.
Thus $w(\cdot,Q)$ is finite, continuous, convex
and positively homogeneous on all of $\R^d$.

Such a function is the support function of a
nonempty compact convex set. Its supporting
subgradients exist at every point, and not merely at
almost every one, which is what realizing the support value
in every direction requires. Write $f=w(\cdot,Q)$ and let
$E=\{(y,s):s\geq f(y)\}$, a closed convex set with nonempty
interior having $(P,f(P))$ on its boundary. For
$x_k=(P,f(P)-1/k)$ let $z_k$ be the nearest point of $E$,
which exists because the minimization may be restricted to a
bounded set, and put $v_k=(x_k-z_k)/\|x_k-z_k\|$. Convexity
gives $v_k\cdot(z-z_k)\leq0$ for all $z\in E$, and a
subsequential limit $(a,b)$ of the unit vectors $v_k$
satisfies $a\cdot(y-P)+b(s-f(P))\leq0$ on $E$. Letting
$s\to+\infty$ gives $b\leq0$; if $b=0$ then
$a\cdot(y-P)\leq0$ for all $y$, forcing $a=0$ and
contradicting $\|(a,b)\|=1$. So $b<0$, and $z=-a/b$
satisfies $f(y)\geq f(P)+z\cdot(y-P)$ for all $y$.
Homogeneity then
makes such a subgradient $z$ at $P$ satisfy
$z\cdot P=w(P,Q)$, by applying the subgradient inequality at
$y=0$ and $y=2P$, and hence $z\cdot y\leq w(y,Q)$
for every $y$. The intersection
$D_Q=\{z:z\cdot y\leq w(y,Q)\ \forall y\}$
is closed, nonempty and bounded, since $z\ne0$ gives
$\|z\|\leq w(z/\|z\|,Q)\leq2R$; these supporting subgradients
give $h_{D_Q}=w(\cdot,Q)$. A one-point or lower-dimensional
$D_Q$ is permitted. Therefore
\[
 K=C_Q+D_Q\qquad\text{for every }Q\ne0.
 \tag{D.4}\label{eq:all-column-summand}
\]
The row statement of Proposition~\ref{prop:minkowski} is
obtained by running the whole argument for
$\widetilde F(P,Q)=F(Q,P)$, which is again a body-valued
binary IC diagonally unanimous rule into the same $K$, so
every hypothesis used above holds verbatim; anonymity of $F$
is not required.
When $d=2$, the entire environment is already a
plane, the original rule itself satisfies the hypotheses of
Lemma~\ref{lem:planar-body}, and
\eqref{eq:all-column-summand} follows
directly from it for every actual column and, after the
same exchange of labels, for every actual row. This is a
genuinely separate branch, not the $d\geq3$ conclusion
evaluated at $d=2$.
This proves Proposition~\ref{prop:minkowski}.
It uses continuity of the envelopes, not assumed
continuity of the original rule.
\section{Consequences}\label{app:population}

\subsection{Efficiency and dimension one}

The support map is onto \(\partial K\), so diagonal unanimity
implies that the rule is onto. Conversely, suppose the rule
is onto and IC. For a type \(p\), choose a profile attaining
\(X(p)\). Starting at the diagonal profile \(p_N\), change
reports successively to this profile. Before each change
the relevant agent still truthfully reports \(p\).
To make the chain explicit, if the target profile is
$r=(r_1,\ldots,r_n)$, let $s^0=p_N$ and
$s^k=(r_1,\ldots,r_k,p,\ldots,p)$ for $k=1,\ldots,n$.
Agent $k$ has true type $p$ at $s^{k-1}$, so IC gives
$p\cdot F(s^{k-1})\geq p\cdot F(s^k)$. Chaining these
inequalities and using $F(s^n)=X(p)$ gives
\[
                       p\cdot F(p_N)\geq p\cdot X(p)=h_K(p).
\]
Strict exposure forces \(F(p_N)=X(p)\).

At a diagonal profile, every outcome other than \(X(p)\)
is strictly worse for every participant, and \(X(p)\) is
itself feasible. Weak Pareto
efficiency there therefore implies diagonal unanimity.
Conversely, a fixed peak rule is Pareto efficient:
any change from its controlling agent's unique optimum
makes that agent strictly worse off, so no boundary point
can weakly improve on the outcome for everyone.
The remaining two links are immediate but should be stated,
since the corollary asserts a full equivalence. Because the
population is nonempty, an outcome strictly improving for
everyone also weakly improves for everyone and strictly for
one, so Pareto efficiency implies weak Pareto efficiency;
and efficiency at all profiles implies efficiency at the
diagonal ones. With the equivalence of unanimity and
ontoness proved above, the implications close into the cycle
\[
 \begin{gathered}
 \text{fixed dictatorship}\Rightarrow\text{Pareto}
 \Rightarrow\text{weak Pareto}\\
 \Rightarrow\text{diagonal weak Pareto}
 \Rightarrow\text{unanimity}
 \Rightarrow\text{fixed dictatorship},
 \end{gathered}
\]
the last arrow being Theorem~\ref{thm:frontier} and hence
available only within the IC class.
This proves Corollary~\ref{cor:efficiency}.

In dimension one take \(K=[-1,1]\), with report and outcome
space \(\{-1,1\}\). The two-agent rule choosing \(1\) whenever
at least one report is \(1\), and \(-1\) otherwise, is
unanimous. A true \(1\) type obtains its top. A true
\(-1\) type either obtains its top when the other report
is \(-1\), or cannot alter the outcome when it is \(1\).
Thus the rule is ordinarily IC. At the two disagreement
profiles its choice is \(1\), which rules out either
fixed label. This establishes the comparison stated
after Theorem~\ref{thm:frontier}.

\subsection{Continuous rules}\label{app:continuous}

For completeness, assume in this subsection that the rule
is continuous as well as IC and diagonally unanimous.
Choose an interior ball \(\overline B(y_0,\rho)\subset K\),
\(\rho>0\). Radial projection from \(y_0\) is a homeomorphism
\(r:\partial K\to S^{d-1}\): every ray meets the boundary
once, and compactness gives continuity of the bijection
and its inverse. Put \(T(p)=r(X(p))\).
The interior ball gives
\[
                       p\cdot(X(p)-y_0)\geq\rho>0,
\]
so \(p\cdot T(p)>0\). The normalized straight-line homotopy
from \(T(p)\) to \(p\) never vanishes. Consequently \(T\)
has degree one. This construction does not require
injectivity of the support map at a nonsmooth point.

Let \(m=d-1\geq1\) and \(g=r\circ F:(S^m)^n\to S^m\).
Then \(g\circ\Delta=T\), where \(\Delta\) is the diagonal
map. In degree \(m\), the integer cohomology of the
product is the direct sum of the \(n\) coordinate
copies of \(\mathbb Z\). Writing
\[
                       g^*\omega=\sum_i a_i\pi_i^*\omega
\]
for a generator \(\omega\), diagonal pullback gives
\(\sum_i a_i=1\). Some coordinate therefore has
nonzero coefficient, say $a_i\ne0$. For fixed opponents' reports
$p_{-i}$, let
\[
 \iota_{i,p_{-i}}:S^m\longrightarrow(S^m)^n
\]
be the coordinate inclusion that varies only report $i$. Pulling
the cohomology identity back by this inclusion gives
\[
 (g\circ\iota_{i,p_{-i}})^*\omega=a_i\omega.
\]
Thus every such slice has degree $a_i$. The value is independent
of $p_{-i}$ because the space of opponents' reports is path
connected, so any two slices are homotopic through slices.

A map from $S^m$ to itself with nonzero degree is onto. Indeed, if
its image omitted a point, it would factor through the punctured
sphere, which is contractible, and the induced map on top-degree
cohomology would be zero. Hence every $i$-slice of $g$, and therefore
every corresponding $i$-slice of $F$, is onto the whole boundary.
Fix a true type $p_i$. Since that slice can attain $X(p_i)$, IC says
the truthful report yields utility at least $p_i\cdot X(p_i)$.
This is the maximum of $p_i\cdot y$ on $K$, so equality holds and
strict convexity makes the chosen point uniquely $X(p_i)$. The same
coordinate $i$ therefore selects its peak for every opponents'
profile and is a fixed dictator.

This is the standard continuous finite-dimensional
benchmark described in the introduction.
The main proof does not assume this continuity;
it derives the needed regularity and menu structure
from the incentive inequalities themselves.
\end{document}